\newcommand{\ourtitle}{Towards Mass Spectrum Analysis with ASP}
\definecolor{strongteal}{HTML}{1f77b4}
\definecolor{strongorange}{HTML}{ff7f0e}
\definecolor{strongpurple}{HTML}{9467bd}
\definecolor{stronggreen}{HTML}{2ca02c}
\newcommand\ASPinline[1]{\mintinline[escapeinside=||]{asp-lexer.py:ASPLexer}{#1}}
\newcommand\ASPinline[1]{\mintinline[escapeinside=||]{'asp-lexer.py:ASPLexer -f vs -x'}{#1}}
\newcommand\ASPinline[1]{\mintinline[escapeinside=||]{asp-lexer.py:ASPLexer -f vs -x}{#1}}
\newlength{\mcfvspace}
\def\emails#1#2#3{{\normalfont\rmfamily
  \itshape\textup{(}e-mails: \textup{\texttt{\href{mailto:#1}{\color{blue}#1},}}\ \textup{\texttt{\href{mailto:#2}{\color{blue}#2},}} \textup{\texttt{\href{mailto:#3}{\color{blue}#3}}})}}
\newcommand{\opfont}[1]{\text{\sf{#1}}} \newcommand{\ltuple}{\langle}
\newcommand{\rtuple}{\rangle}
\newcommand{\tuple}[1]{\ltuple{#1}\rtuple}
 \declaretheorem[sharenumber=dummytheorem]{proposition}
\declaretheorem[sharenumber=dummytheorem]{example}
\declaretheorem[sharenumber=dummytheorem]{definition}
\newenvironment{talign}
  {\align}
  {\endalign}
\newenvironment{talign*}
  {\csname align*\endcsname}
  {\endalign}
\newenvironment{tequation}
  {\equation}
  {\endequation}
\newenvironment{tequation*}
  {\csname equation*\endcsname}
  {\endequation}
\begin{document}
\title{\ourtitle}
\lefttitle{Nils Küchenmeister, Alex Ivliev, and Markus Krötzsch}
\jnlPage{1}{22}
\jnlDoiYr{2025}
\doival{10.1017/xxxxx}
\pdfimageresolution=192 \begin{authgrp}
\author{\gn{Nils Küchenmeister}\orcid[0009-0004-0376-0328], \gn{Alex Ivliev}\orcid[0000-0002-1604-6308], and \gn{Markus Krötzsch}\orcid[0000-0002-9172-2601]}
\affiliation{
  TU Dresden, Germany\\
  \emails{nils.kuechenmeister@tu-dresden.de}{alex.ivliev@tu-dresden.de}{markus.kroetzsch@tu-dresden.de}
}
\end{authgrp}  
\maketitle \begin{abstract}
We present a new use of Answer Set Programming (ASP) to discover the molecular structure
of chemical samples based on the relative abundance of elements and structural fragments,
as measured in mass spectrometry. 
To constrain the exponential search space for this combinatorial problem, we develop
canonical representations of molecular structures and an ASP implementation that
uses these definitions. We evaluate the correctness of our implementation over a large
set of known molecular structures, and we compare its quality and performance to other ASP
symmetry-breaking methods and to a commercial tool from analytical chemistry.
Under consideration in Theory and Practice of Logic Programming (TPLP).
 \end{abstract}
\begin{keywords}
    ASP,
    symmetry-breaking,
    molecular structure,
    chemistry
\end{keywords}

\section{Introduction}

Mass spectrometry is a powerful technique to determine the chemical composition of a substance \cite[]{massspec}.
However, the mass spectrum of a substance does not reveal its exact molecular structure,
but merely the possible ratios of elements in the compound and its fragments.
To identify a sample, researcher may use commercial databases 
(for common compounds), or software tools that can discover 
molecular structures from the partial information available.
The latter leads to a combinatorial search problem that is a natural fit 
for answer set programming (ASP).
Molecules can be modeled as undirected graphs, representing the different elements
and atomic bonds as node and edge labels, respectively.
ASP is well-suited to encode chemical domain knowledge 
(e.g., possible number of bonds for carbon) and extra information about the sample
(e.g., that it has an $\textit{OH}$ group), so that each answer set encodes a 
valid molecular graph.

Unfortunately, this does not work: a direct ASP encoding
yields exponentially many answer sets for each molecular graph due to the large number of
symmetries (automorphisms) in such graphs.
For example, $\textit{C}_6\textit{H}_{12}\textit{O}$ admits 211 distinct molecule structures but leads to 111,870
answer sets. 
Removing redundant solutions and limiting the search to unique representations
are common techniques used in the ASP community 
where they have motivated research on \emph{symmetry-breaking}.
Related approaches work by rewriting the ground program before solving, see \cite[\textsc{sbass}]{sbass} and \cite[\textsc{BreakID}]{breakid_sat,breakid_system_description},
or augmenting ASP programs with additional constraints learned from generated instances \cite[\textsc{ilasp} using \textsc{sbass}]{learning}.
Some methods also integrate symmetry-breaking with existing solvers \cite[\textsc{idp3}]{localdomain},
or provide dedicated solvers \cite[\textsc{HC-asp}]{newsearch}.
In addition to these general approaches, 
there are also methods that explicitly 
define symmetry-breaking constraints for undirected graphs \cite[]{graph}.
However, our experiments with some of these approaches
still produced 10--10,000 times more answer sets than molecules
even in simple cases.

We therefore develop a new approach that prevents symmetries in graph representations already during grounding,
and use it as the core of an ASP-based prototype implementation for enumerating molecular structures based on 
partial chemical information.
In Section~\ref{sec:tool}, we explain the problem and our prototype tool from a user perspective.
We then define the problem formally in Section~\ref{sec_prob}, using an abstract notion of 
\emph{tree representations} of molecular graphs that takes inspiration from the chemical notation SMILES \cite[]{smiles}.
We then derive a new canonical representation for molecular graphs (Section~\ref{sec-encoding}) to
guide our ASP implementation (Section~\ref{sec-impl}).
In Section~\ref{sec_experiments}, we evaluate the correctness, symmetry-breaking capabilities, and performance
of our tool in comparison to other ASP-based approaches and a leading commercial software for analytical chemistry \cite[]{gugisch2015molgen}.
We achieve perfect symmetry-breaking, i.e., the removal of all redundant solutions,
for acyclic graph structures and up to three orders of magnitude reduction in 
answer sets for cyclic cases in comparison to other ASP approaches. Overall, ASP therefore appears to be a promising basis for
this use case, and possibly for other use cases concerned with undirected graph structures.

Our ASP source code, evaluation helpers, and data sets are available online at
\url{https://github.com/knowsys/eval-2024-asp-molecules}. The sources of our prototype application are at
\url{https://gitlab.com/nkuechen/genmol/}. 
This work is an extended version of the conference paper \cite[]{KIK2024},
which was presented at LPNMR 2024.

 \section{Analysis of Mass Spectra with \textsc{Genmol}}\label{sec:tool}

Many mass spectrometers break up samples into smaller fragments and
measure their relative abundance. 
The resulting mass spectrum forms a characteristic pattern, 
enabling inferences about the underlying sample.
High-resolution spectra may contain information such as 
``the molecule has six carbon atoms''
or ``there is an $\textit{OH}$ group'',
but cannot reveal the samples's full molecular structure. 
In chemical analysis, we are looking for 
molecular structures that are consistent with the measured mass spectrum.

\begin{figure}[t!]
  \centering
  \includegraphics[width=\textwidth]{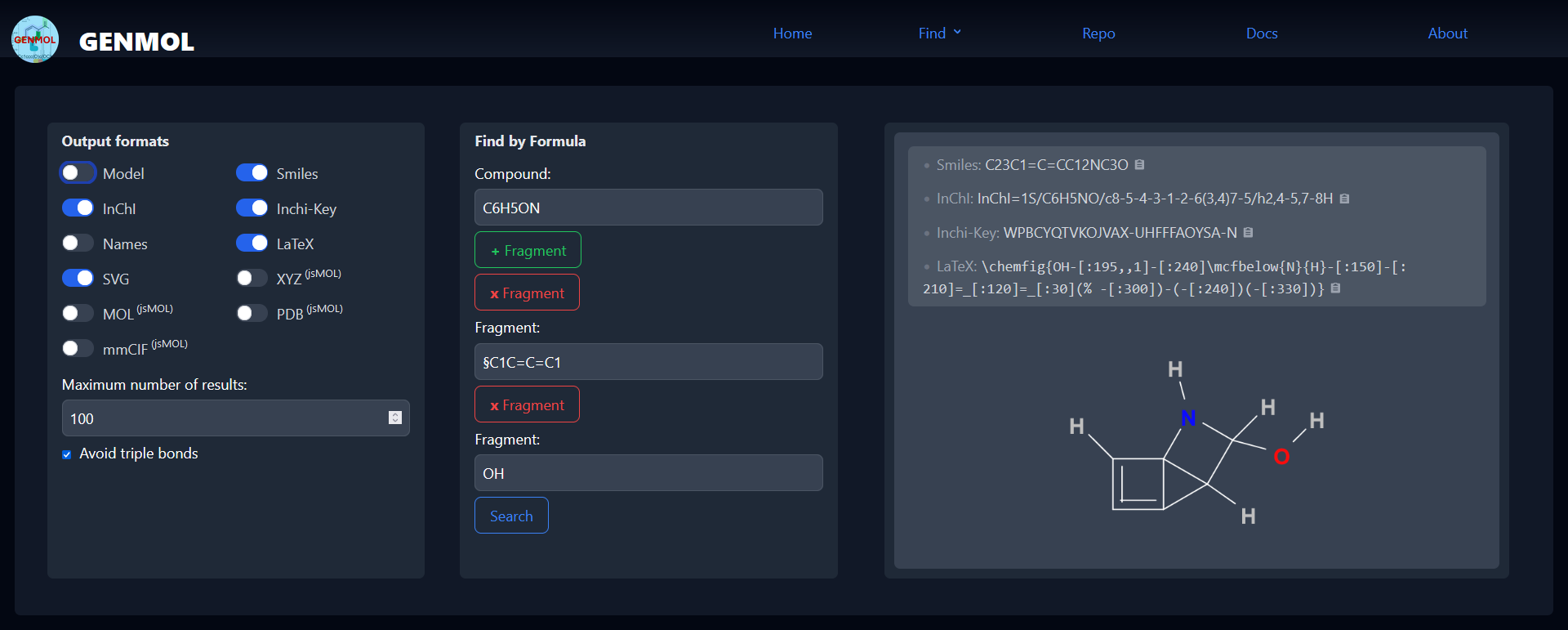}
  \caption{User interface of \textsc{Genmol}}
  \label{fig:genmol}
\end{figure}
To address this task, we have developed \textsc{Genmol},
a prototype application for enumerating molecular structures for a given composition of fragments.
It is available as a command-line tool and as a progressive web application (PWA), shown in Fig.~\ref{fig:genmol}.
\textsc{Genmol} is implemented in Rust, with the web front-end using the Yew framework on top of a JSON-API,
whereas the search for molecular structures is implemented in Answer Set Programming (ASP) and solved using
\emph{clingo} \cite[]{clingo}.
An online demo of \textsc{Genmol} is available for review at \url{https://tools.iccl.inf.tu-dresden.de/genmol/}.

The screenshot shows the use of \textsc{Genmol} with a \emph{sum formula} $C_6H_5ON$ and two fragments as input.
Specifying detected fragments and restricting bond types helps to reduce the search space.
Alternatively, users can provide a molecular mass or a complete mass spectrum,
which will then be associated with possible chemical formulas using, e.g., 
information about the abundance of isotopes.

The core task of \textsc{Genmol} then is to find molecules that match the given input constraints.
Molecules in this context are viewed as undirected graphs of atoms, 
linked by covalent bonds
that result from sharing electrons.\footnote{This graph does not always determine the spacial configuration of molecules, which cannot be determined by mass spectrometry alone, yet it suffices for many applications.}
Many chemical elements admit a fixed number of bonds, the so-called \emph{valence},
according to the number of electrons available for binding
(e.g., carbon has a valence of 4). Bonds may involve several electrons, leading to single, double,
triple bonds, etc. The graph structure of molecules, the assignment of elements, and the possible
types of bonds can lead to a large number of possible molecules for a single chemical formula,
and this combinatorial search task is a natural match for Answer Set Programming.

 \section{Problem Definition: Enumeration of Molecules}\label{sec_prob}

We begin by formalizing the problem of molecule enumeration, and 
by introducing a chemistry-inspired representation of molecules.
We consider a set $\mathbb{E}$ of \emph{elements}, with each $e\in\mathbb{E}$ associated
with a \emph{valence} $\mathbb{V}(e)\in\mathbb{N}_{>0}$.
We assume that $\mathbb{E}$ contains a distinguished element $H\in\mathbb{E}$
(hydrogen) with $\mathbb{V}(H)=1$.
Note that in reality, elements may have multiple valences.
To simplify presentation, we will ignore this aspect in the formalization.
The implementation, however, supports multi-valence elements 
by representing them as separate elements. 
We model molecules as undirected graphs with edges labelled by natural numbers to
indicate the type of bond.

\begin{definition}\label{def_molgraph}
A \emph{molecular graph} $G$ is a tuple $G=\tuple{V, E, \ell, b}$
with vertices $V=\{1,\ldots,k\}$ for some $k\geq 1$,
undirected edges $E \subseteq \binom{V}{2}$, where $\binom{V}{2}$ is the set of all 2-element subsets of $V$,
and labelling functions $\ell \colon V \to \mathbb{E}$ and $b \colon E \to \mathbb{N}_{>0}$.

The \emph{degree} $\opfont{deg}(v)$ of a vertex $v\in V$ is defined as $\opfont{deg}(v)=\sum\{b(e)\mid e\in E, v\in e\}$.
A list of $n$ distinct vertices $v_1, \ldots, v_n$ is a \emph{simple path} in $G$ if $\{v_i, v_{i+1}\} \in E$ for every $1 \leq i < n$.
The graph $G$ is connected if there is a simple path from $v$ to $w$ for every pair $v, w \in V$.
\end{definition}

\noindent Since we assume that atoms in a molecule use all available bonds according to their valence,
the (very frequent) hydrogen atoms do not need to be mentioned explicitly. 
Such \emph{hydrogen-suppressed} molecular graphs are common in computational chemistry,
and the basis for the next definition:

\begin{definition}\label{def:validgraph}
A \emph{molecular formula} is a function $f\colon\mathbb{E}\to\mathbb{N}$.
A (hydrogen-suppressed) molecular graph $G=\tuple{V, E, \ell, b}$ is \emph{valid} for $f$,
if it satisfies the following properties:
\begin{enumerate}[label=(\theenumi),leftmargin=25pt]
  \item $G$ is connected,\label{it_validgraph_connect}
  \item for every $e \in \mathbb{E}$ with $e \neq H$, $\#\{v \in V \mid \ell(v) = e \} = f(e)$,\label{it_validgraph_elemsum}
  \item for every $v \in V$, $\opfont{deg}(v) \leq \mathbb{V}(\ell(v))$,\label{it_validgraph_valence}
  \item $\sum_{v \in V} \left(\mathbb{V}(\ell(v)) - \opfont{deg}(v)\right) = f(H)$.\label{it_validgraph_hydrogen}
\end{enumerate}
\end{definition}

\noindent The enumeration problem can now be stated as follows:
for a given molecular formula $f$, enumerate, up to isomorphism, all valid molecular graphs for $f$.
In general, the number of distinct isomorphic molecular graphs is exponential in the number of atoms.
It is therefore important to reduce the enumeration of redundant isomorphic graphs.

A first step towards this is the use of a more restricted representation of molecular graphs.
Here, we take inspiration from the \emph{simplified molecular-input line-entry system} (\emph{SMILES}),
a widely used serialization format for molecular graphs.
SMILES strings start from an (arbitrary) spanning tree of the molecular graph,
serialized in a depth-first order, with subtrees enclosed in parentheses.
Edges not covered by the spanning tree (since they would complete a cycle) are indicated
by pairs of numeric \emph{cycle markers}.

\newcommand{\vnum}[1]{^{\textcolor{stronggreen}{#1}}}
\newcommand{\natom}[2]{\mathit{#1}\vnum{#2}}
\begin{figure}[tbp]
  \centering
  \begin{minipage}{0.9\textwidth}
    \begin{multicols*}{2}
      \begin{center}
      \scalebox{1.0}{\chemfig{\mbox{}N\vnum{6}-[:300,,2,,strongteal]\natom{C}{5}=[:240,,,,strongteal]\natom{N}{4}-[:300,,,,strongteal]\natom{C}{3}=[,,,,strongteal]\natom{N}{2}-[:60,,,,strongteal]\natom{C}{1}=[:120,,,,strongteal]\natom{C}{7}(-[:180,,,,strongorange]\phantom{\natom{C}{8}})-[:48,,,,strongteal]\natom{N}{8}=[:336,,,1,strongteal]\natom{C}{9}-[:264,,1,1,strongteal]\natom{N}{10}(-[:192,,1,,strongorange]\phantom{\natom{C}{8}})}}
      \end{center}
\columnbreak
\begin{center}
      \scalebox{0.8}{
        \begin{tikzpicture}
          \tikzset{>=latex}
          \graph[grow down=0.65cm, branch right=0.45cm, nodes={circle,inner sep=0,minimum size=0cm}]{
            c1/$\natom{C}{1}$ [xshift=2.0cm] --[strongteal] {
              n2/$\natom{N}{2}$ [xshift=1.6cm] --[double, strongteal] c3/$\natom{C}{3}$ [xshift=1.2cm] --[strongteal] n4/$\natom{N}{4}$ [xshift=0.8cm] --[double, strongteal] c5/$\natom{C}{5}$ [xshift=0.4cm] --[strongteal] n6/$\natom{N}{6}$,
              c7/$\natom{C}{7}$ [xshift=2.0cm, > double] --[strongteal] n8/$\natom{N}{8}$ [xshift=2.4cm] --[double, strongteal] c9/$\natom{C}{9}$ [xshift=2.8cm] --[strongteal] n10/$\natom{N}{10}$ [xshift=3.2cm]
            };
c5/$\natom{C}{5}$ --[thick,bend right, dotted, strongorange,double distance=0pt] c7/$\natom{C}{7}$;
            c1/$\natom{C}{1}$ --[thick,bend left, dotted, strongorange,double distance=0pt] n10/$\natom{N}{10}$;
          };
        \end{tikzpicture}
      }
   \end{center}
    \end{multicols*}
  \end{minipage}
  \caption{Hydrogen-suppressed molecular graph of adenine ($\mathit{C}_5\mathit{H}_5\mathit{N}_5$) and corresponding spanning tree with cycle edges (dotted); superscripts indicate correspondence of vertices}\label{fig-example-adenine}
\end{figure}
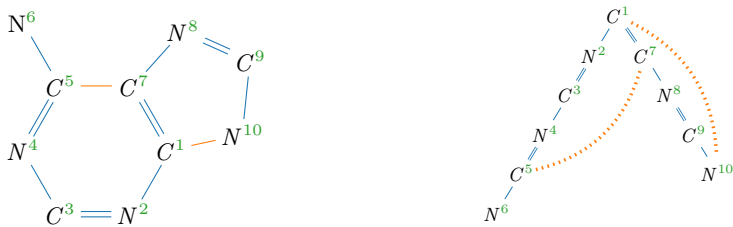
\begin{example}\label{example:problem}
Adenine ($\textit{C}_5\textit{H}_5\textit{N}_5$) has the
graph structure shown in Figure~\ref{fig-example-adenine} (left), with a spanning
tree on the right.
Hydrogen atoms are omitted, as they take up any free binding place.
For instance, $\natom{C}{3}$ has a double- and a single-bond and a valence of four, leaving one hydrogen atom.
The SMILES is \texttt{C1(N=CN=C2N)=C2N=CN1}
where consecutive atoms are connected by bonds and double bonds are marked (=). 
The segment from vertex 2 to 6 is in parentheses to indicate a branch.
Additionally, 
the two non-sequential (dotted) connections 
are denoted by matching pairs of numerical markers.
\end{example}

\vspace*{1em}%
\begin{definition}\label{def:tree-representation}
A molecular graph $G=\tuple{V, E, \ell, b}$ is a \emph{molecular tree} if
it is free of cycles and the natural order of vertices $V=\{1,\ldots,k\}$
corresponds to a depth-first search of the tree (in particular, vertex $1$ is
the root).

A \emph{tree representation} of an arbitrary molecular graph $G=\tuple{V, E, \ell, b}$
is a set $T\subseteq E$ such that $\tuple{V, T, \ell, b}$ is a molecular tree.
In this case, we denote $G$ as $\tuple{V, T \cup C, \ell, b}$ where
$T$ are the \emph{tree edges} and $C=E\setminus T$ are the \emph{cycle edges}.
\end{definition}

\noindent Note that the tree edges $T$ by definition visit every vertex of $V$, and 
the cycle edges $C$ merely make additional connections between vertices of the tree.
In SMILES, the edges in $C$ and their labels (bond types) are encoded with special markers,
while the order of vertices is given by the order of appearance in the SMILES string.

A tree representation for a given molecular graph is uniquely determined by
the following choices:
(a) a spanning tree (corresponding to a choice of tree edges and cycle edges),
(b) a root vertex, and
(c) for every vertex, an order of visiting its child vertices in depth first search.
For a given graph structure, the number of tree representations can be significantly
lower than the number of isomorphic molecular graphs. For example, a graph that is a chain
has only linearly many tree representations, but still admits exponentially many graphs.
Nevertheless, the number of tree representations can still be exponential, and we will
investigate below how the choices in (a)--(c) can be further constrained
to reduce redundancy.

 \section{Canonical tree representations of molecular graphs}\label{sec-encoding}

To eliminate redundant isomorphic solutions, we first define a canonical tree representation of
any molecular graph. The defining conditions of this unique representation will then be used to 
constrain the search for possible graphs in our implementation. We first consider the simpler case
of molecular trees.

\subsection{Canonical Molecular Trees}\label{sec_canon_tree}

We define a total order on molecular trees, 
which will allow us to define a largest tree among a
set of candidates.
To define this order inductively, 
we need to consider subtrees that may not have $1$ as their root.
For a molecular tree $G=\tuple{V, E, \ell, b}$ 
with vertex $v\in V$, let $\opfont{subtree}(G,v)$
be the tuple $\tuple{V', E', \ell, b, \opfont{in}(G,v)}$ 
where $V'$ and $E'$ are the restriction of $V$ and $E$, respectively, 
to vertices that are part of the subtree with root $v$ in $G$,
and either $\opfont{in}(G,v)=0$ if $v=1$ is the root of $G$, or $\opfont{in}(G,v)=b(e)$ is the edge label $b(e)\geq 1$
of the edge $e$ between $v$ and its parent in $G$.
Moreover, if $\tuple{c_1,\ldots,c_k}$ are the ordered children of $v$, then 
$\opfont{childtrees}(G,v)=\tuple{\opfont{subtree}(G,c_1),\ldots,\opfont{subtree}(G,c_k)}$.

Finally, let $R(G,v)=\tuple{d,s,c,\ell,b}$ be the tuple with
$d\geq 1$ the depth of $\opfont{subtree}(G,v)$;
$s\geq 1$ the size (number of vertices) of $\opfont{subtree}(G,v)$;
$c\geq 0$ the number of children of $v$;
$\ell\in\mathbb{E}$ the element $\ell(v)$ of $v$; and
$b=\opfont{in}(G,v)\geq 0$.
Intuitively, we use $R(G, v)$ to define an order between nodes,
which we extend to an order between molecular trees.
For the following definition, recall that the \emph{lexicographic extension} of
a strict order $\prec$ to tuples of the same size $k$ is defined by setting
$\vec{t}\prec\vec{u}$ if there is $i\in\{1,\ldots,k\}$ such that
$\vec{t}[i]\prec\vec{u}[i]$ and $\vec{t}[j]=\vec{u}[j]$ for all $j<i$.

\vspace*{1em}%
\begin{definition}\label{def_tree_order}
Let $\sqsubset$ be an arbitrary but fixed strict total order on $\mathbb{E}$,
overloaded to also denote the usual order $<$ on natural numbers.
We further extend $\sqsubset$ to 5-tuples of the form $R(G,v)$ lexicographically.

We define a strict order $\prec$ on subtrees as the smallest relation 
where, for each pair of subtrees $S_i=\opfont{subtree}(G_i,v_i)$ of molecular trees $G_i$ ($i=1,2$),
$S_1\prec S_2$ holds if
\begin{enumerate}[label=(\theenumi),leftmargin=25pt]
\item $R(G_1,v_1)\sqsubset R(G_2,v_2)$, or \label{item_tree_order_tuple}
\item $R(G_1,v_1)=R(G_2,v_2)$, i.e., $S_1$ and $S_2$ are locally indistinguishable, with (necessarily equal) number of children $k$, such that
    $\opfont{childtrees}(G_1,v_1)\prec\opfont{childtrees}(G_2,v_2)$ where $\prec$ is the lexicographic extension of $\prec$ to $k$-tuples of subtrees.
    \label{item_tree_order_children}
\end{enumerate}
For molecular trees $G_1$ and $G_2$, we define $G_1\prec G_2$ if $\opfont{subtree}(G_1,1)\prec\opfont{subtree}(G_2,1)$.
\end{definition}

\begin{example}
A tree representation of threonine,
annotated with the respective $R(G, v)$ tuple
for all of its nodes, is shown in Figure~\ref{fig-example-threonine}.
In two instances, pairs of nodes share the same $R(G, v)$ value.
For example, 
nodes $\natom{C}{2}$ and $\natom{C}{5}$ are indistinguishable 
by Definition~\ref{def_tree_order}~\ref{item_tree_order_tuple},
but Definition~\ref{def_tree_order}~\ref{item_tree_order_children} 
yields $\opfont{subtree}(G,2) \prec \opfont{subtree}(G,5)$.

\end{example}
    
\begin{figure}[tbp]
\centering
\begin{minipage}{0.5\textwidth}
    \begin{flushright}
        $R(G,1)=       \tuple{3,8,3,C,0}$\\
        $R(G,2)=R(G,5)=\tuple{2,3,2,C,1}$\\
        $R(G,3)=       \tuple{1,1,0,O,2}$\\
        $R(G,4)=R(G,6)=\tuple{1,1,0,O,1}$\\
        $R(G,7)=       \tuple{1,1,0,C,1}$\\
        $R(G,8)=       \tuple{1,1,0,N,1}$
    \end{flushright}
\end{minipage}\begin{minipage}{0.5\textwidth}
    \begin{center}
    \scalebox{0.8}{
        \begin{tikzpicture}
        \tikzset{>=latex}
        \graph[grow down=0.65cm, branch right=0.45cm, nodes={circle,inner sep=0,minimum size=0cm}]{
            c1/$\natom{C}{1}$ [xshift=0.8cm, draw=red] --[strongteal] {
                c2/$\natom{C}{2}$ [xshift=0.4cm] --[strongteal] {
                    o3/$\natom{O}{3}$ [> double],
                    o4/$\natom{O}{4}$ [xshift=0.4cm]
                },
                c5/$\natom{C}{5}$ [xshift=0.4cm]--[strongteal] {
                    o6/$\natom{O}{6}$ [xshift=0.8cm],
                    c7/$\natom{C}{7}$ [xshift=1.2cm]
                },
                n8/$\natom{N}{8}$ [xshift=1.2cm]
            };
        };
        \end{tikzpicture}
    }
    \end{center}
\end{minipage}
\caption{Canonical molecular tree of threonine ($\mathit{C}_4\mathit{H}_9\mathit{N}\mathit{O}_3$); central vertex is circled}\label{fig-example-threonine}
\end{figure}
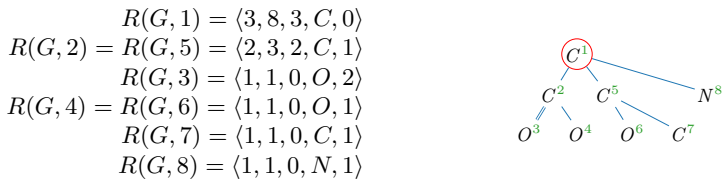

\begin{proposition}\label{prop_prec_total}
The relation $\prec$ of Definition~\ref{def_tree_order} is a strict total order on molecular trees.
\end{proposition}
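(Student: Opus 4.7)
The plan is to verify the three defining properties of a strict total order---irreflexivity, transitivity, and trichotomy---by well-founded induction on the sum of subtree sizes, working first with a helper claim about $\opfont{subtree}(G, v)$ and only then transferring the result to whole trees. As a preliminary, I would observe that $\sqsubset$ is already a strict total order on 5-tuples of the form $R(G, v)$: each component ranges over $\mathbb{N}$ (with its usual order) or $\mathbb{E}$ (with the fixed total order $\sqsubset$), and the lexicographic extension of strict total orders to fixed-length tuples is again a strict total order. This handles the base comparison used in clause~\ref{item_tree_order_tuple} of Definition~\ref{def_tree_order}.

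Next, I would introduce an equivalence $\simeq$ on subtrees, defined recursively by: $S_1 \simeq S_2$ iff $R(G_1, v_1) = R(G_2, v_2)$ and corresponding entries of $\opfont{childtrees}(G_1, v_1)$ and $\opfont{childtrees}(G_2, v_2)$ are pairwise $\simeq$-equivalent. Intuitively, $\simeq$ captures isomorphism of rooted ordered labeled subtrees including the incoming bond label. The central claim, proved by induction on the sum $s_1 + s_2$ of subtree sizes, is: for every pair of subtrees $S_1, S_2$, exactly one of $S_1 \prec S_2$, $S_1 \simeq S_2$, or $S_2 \prec S_1$ holds; $\prec$ is irreflexive; and $\prec$ is transitive. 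In the base case (two leaves), clause~\ref{item_tree_order_children} is vacuous and all three properties reduce to those of $\sqsubset$ on 5-tuples. In the inductive step, if $R(G_1, v_1) \ne R(G_2, v_2)$, clause~\ref{item_tree_order_tuple} settles the comparison by totality of $\sqsubset$; if the $R$-tuples agree, then both subtrees have the same number of children $k$ (recorded by the third component of $R$), and the induction hypothesis applies to the $k$ pairs of child subtrees whose sizes are strictly smaller. The lexicographic extension of a strict total order (on children modulo $\simeq$) to length-$k$ tuples is again a strict total order, so clause~\ref{item_tree_order_children} propagates trichotomy, irreflexivity, and transitivity to the parent level.

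Finally, I would deduce the proposition by applying the claim at the roots. Given molecular trees $G_1, G_2$, the subtrees $S_i = \opfont{subtree}(G_i, 1)$ satisfy $\opfont{in}(G_i, 1) = 0$, so the claim yields exactly one of $S_1 \prec S_2$, $S_1 \simeq S_2$, or $S_2 \prec S_1$. Crucially, the DFS-numbering constraint in Definition~\ref{def:tree-representation} means that the rooted ordered labeled structure captured by $\simeq$ uniquely determines the tuple $\tuple{V, E, \ell, b}$, so $S_1 \simeq S_2$ forces $G_1 = G_2$ as molecular trees; irreflexivity and transitivity transfer directly. The main obstacle will be the bookkeeping inside the inductive step: one must establish trichotomy, irreflexivity, and transitivity simultaneously so that, for instance, transitivity of the lex-extended $\prec$ on children is available exactly when parent-level transitivity is being proved, and one must justify that equal $R$-tuples guarantee equal child counts so that lexicographic comparison on $\opfont{childtrees}$ is well-typed.
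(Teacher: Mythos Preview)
Your plan is correct and follows the same inductive strategy as the paper, though you are considerably more thorough: the paper's proof argues only totality, by induction on the (common) depth of two subtrees with equal $R$-tuples, and it writes literal equality $S_1=S_2$ where you more carefully introduce the structural equivalence~$\simeq$ and then argue that DFS numbering collapses $\simeq$ to identity at the root. One small caveat worth fixing when you write it out: inducting on the pair-measure $s_1+s_2$ is natural for trichotomy but sits awkwardly with transitivity, which is a ternary property not attached to any single pair; the paper's induction on depth (or, equivalently, on the maximum size) makes the inductive hypothesis uniformly available for all three child-tuples that appear in a transitivity step.
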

\begin{proof}
The claim follows by showing that $\prec$ is a strict order on subtrees.
The order $\sqsubset$ on tuples is a strict total order 
since it is the lexicographic extension of strict total orders.
Hence, all subtrees $S_i=\opfont{subtree}(G_i,v_i)$ ($i=1,2$) with $R(G_1,v_1)\neq R(G_2,v_2)$ are $\prec$-comparable by \ref{item_tree_order_tuple}.
Totality for case $R(G_1,v_1)= R(G_2,v_2)$ is shown by induction on the (equal) depth of the subtrees $S_1$ and $S_2$.
For depth $1$, $S_1$ and $S_2$ have no children, and $R(G_1,v_1)=R(G_2,v_1)$ implies $S_1=S_2$.
For depth $i$ with $i>1$, we can assume all subtrees of depth $\leq i-1$ to be $\prec$-comparable unless equal. 
If $\opfont{childtrees}(G_1,v_1)$ and $\opfont{childtrees}(G_2,v_2)$ are comparable under the lexicographic extension of $\prec$,
then $S_1$ and $S_2$ are $\prec$-comparable by \ref{item_tree_order_children}.
Otherwise, $\opfont{childtrees}(G_1,v_1)=\opfont{childtrees}(G_2,v_2)$, and therefore $S_1=S_2$.
\end{proof}

\noindent By Proposition~\ref{prop_prec_total}, we could define the canonical molecular tree to be the $\prec$-largest tree among
a set of isomorphic trees. However, this would force us to select a root that is the start (or end) of a longest path
in the graph to maximize the depth of the tree. 
In general, many such nodes might exist.
It is therefore more efficient to compare a smaller
set of potential roots that are closer together:

\begin{definition}\label{def_canon_tree}
Let $G=\tuple{V, E, \ell, b}$ be a molecular tree.
A vertex $v_i$ is \emph{central} in a simple path $v_1,\ldots,v_n$ in $G$ if $i\in\{\lceil(n+1)/2\rceil,\lfloor(n+1)/2\rfloor\}$
(a singleton set if $n$ is odd). A vertex is central in $G$ if it is central in
any longest simple path in $G$.

The \emph{canonical molecular tree} $C$ of $G$ is
the $\prec$-largest molecular tree that is obtained by permutation of vertices in $G$
such that the root of $C$ is central in $G$.
\end{definition}

\noindent In every tree, the central vertices of all longest simple paths are the same,
and hence there are at most two.
Indeed, two distinct longest paths always share at least one vertex in a tree. So if two such paths
$\vec{v}$ and $\vec{w}$ would have different central vertices $v_a\neq w_b$,
and a shared vertex $v_i=w_j$ with (w.l.o.g.) $a>i$ and $b>j$, then the path
$v_1,\ldots,v_a,\ldots,v_i=w_j,\ldots,w_b,\ldots,w_1$ would be longer than $\vec{v}$ and $\vec{w}$, contradicting their 
assumed maximal length.
Using this insight, our implementation can find the canonical molecular tree by considering at most two possible roots.

\begin{example}
Two alternative tree representations for glycine are depicted in Figure~\ref{fig-example-glycine}.
The red circles indicate central vertices, 
which are considered as candidates for the root of the spanning tree.
Comparing the roots of both trees,
we find that the tree representation on the right
has more children, while the corresponding subtrees are identical in 
depth and number of vertices.
Therefore, $R(G,1) \sqsubset R(G',1)$,
which implies that the right-hand tree is {$\prec$-larger}
and thus the canonical molecular tree by Definition~\ref{def_canon_tree}.
\end{example}

\begin{figure}[tbp]
\centering
\begin{minipage}{0.29\textwidth}
    \begin{flushright}
        $R(G,1)=\tuple{3,5,2,C,0}$\\
        $R(G,2)=\tuple{2,3,2,C,1}$\\
        $R(G,3)=\tuple{1,1,0,O,2}$\\
        $R(G,4)=\tuple{1,1,0,O,1}$\\
        $R(G,5)=\tuple{1,1,0,N,1}$
    \end{flushright}
\end{minipage}\begin{minipage}{0.2\textwidth}
    \begin{center}
    \scalebox{0.8}{
        \begin{tikzpicture}
        \tikzset{>=latex}
        \graph[grow down=0.65cm, branch right=0.45cm, nodes={circle,inner sep=0,minimum size=0cm}]{
            c1/$\natom{C}{1}$ [xshift=0.8cm, draw=red] --[strongteal] {
                c2/$\natom{C}{2}$ [xshift=0.4cm, draw=red] --[strongteal] {
                    o3/$\natom{O}{3}$ [> double],
                    o4/$\natom{O}{4}$ [xshift=0.4cm]
                },
                n5/$\natom{N}{5}$ [xshift=0.4cm]
            };
        };
        \end{tikzpicture}
    }
    \end{center}
\end{minipage}\begin{minipage}{0.01\textwidth}
    \begin{center}
        $\prec$
    \end{center}
\end{minipage}\begin{minipage}{0.2\textwidth}
    \begin{center}
    \scalebox{0.8}{
        \begin{tikzpicture}
        \tikzset{>=latex}
        \graph[grow down=0.65cm, branch right=0.45cm, nodes={circle,inner sep=0,minimum size=0cm}]{
            c1/$\natom{C}{1}$ [xshift=0.8cm, draw=red] --[strongteal] {
                c2/$\natom{C}{2}$ [xshift=0.4cm, draw=red] --[strongteal] n3/$\natom{N}{3}$,
                o4/$\natom{O}{4}$ [xshift=0.8cm, > double],
                o5/$\natom{O}{5}$ [xshift=1.2cm]
            };
        };
        \end{tikzpicture}
    }
    \end{center}
\end{minipage}\begin{minipage}{0.29\textwidth}
    $R(G',1)=\tuple{3,5,3,C,0}$\\
    $R(G',2)=\tuple{2,2,1,C,1}$\\
    $R(G',3)=\tuple{1,1,0,N,1}$\\
    $R(G',4)=\tuple{1,1,0,O,2}$\\
    $R(G',5)=\tuple{1,1,0,O,1}$
\end{minipage}
\caption{Molecular trees of glycine ($\mathit{C}_2\mathit{H}_5\mathit{N}\mathit{O}_2$); central vertices are circled}\label{fig-example-glycine}
\end{figure}

\subsection{Canonical Molecular Graphs}\label{sec_canon_graph}

Next, we define a canonical tree representation for arbitrary molecular graphs,
allowing us to extend the above order on trees to an order on graphs.
Let $G=\tuple{V, T \cup C, \ell, b}$ be a tree representation with $V=\{1,\ldots,k\}$.
We construct a molecular tree $G'=\opfont{tr}(G)$ by replacing each cycle 
edge $\{v, w\}$ in $C$ with two new edges, one from $v$ and one from $w$
leading to fresh vertices.
Hence, let $V' = \{k + 1, \dots, k + 2 \cdot |C|\}$
be the set of fresh vertices
and assume that the cycle edges $C = \{c_1, \dots, c_m\}$
are numbered in an arbitrary way.
The edges in $C$ are replaced by new tree edges
\begin{equation*}
    T' = \{ \{\min c_i, k + 2i - 1\}, \{\max c_i, k + 2i\} \mid c_i \in C \}.
\end{equation*}
Each new edge inherits the label of its corresponding cycle edge,
and each new vertex receives the label of the node 
that was connected to its parent in the cycle edge.
Formally, we obtain the labeling functions 
$b'(\{v, n\}) = b(c_i)$ and $\ell'(n) = \ell(w)$
for $c_i = \{v, w\} \in C$, $\{v, n\} \in T'$, 
with $v, w \in V$ and $n \in \{k + 2i - 1, k + 2i\} \subseteq V'$.
Thus, the transformed molecular tree is 
$G' = \tuple{V \cup V', T \cup T', \ell \cup \ell', b \cup b'}$.

\vspace*{1em}%
\begin{figure}[tbp]
  \centering
  \begin{minipage}{0.9\textwidth}
    \begin{multicols*}{2}
      \begin{center}
      \scalebox{0.8}{
        \begin{tikzpicture}
          \tikzset{>=latex}
          \graph[grow down=0.65cm, branch right=0.45cm, nodes={circle,inner sep=0,minimum size=0cm}]{
            c1/$\natom{C}{1}$ [xshift=2.0cm] --[strongteal] {
              n2/$\natom{N}{2}$ [xshift=1.6cm] --[double, strongteal] c3/$\natom{C}{3}$ [xshift=1.2cm] --[strongteal] n4/$\natom{N}{4}$ [xshift=0.8cm] --[double, strongteal] c5/$\natom{C}{5}$ [xshift=0.4cm] --[strongteal] n6/$\natom{N}{6}$,
              c7/$\natom{C}{7}$ [xshift=2.0cm, > double] --[strongteal] n8/$\natom{N}{8}$ [xshift=2.4cm] --[double, strongteal] c9/$\natom{C}{9}$ [xshift=2.8cm] --[strongteal] n10/$\natom{N}{10}$ [xshift=3.2cm]
            };
c5/$\natom{C}{5}$ --[thick,bend right, dotted, strongorange,double distance=0pt] c7/$\natom{C}{7}$;
            c1/$\natom{C}{1}$ --[thick,bend left, dotted, strongorange,double distance=0pt] n10/$\natom{N}{10}$;
          };
        \end{tikzpicture}
      }
      \end{center}
      \columnbreak
      \begin{center}
      \scalebox{0.8}{
        \begin{tikzpicture}
          \tikzset{>=latex}
          \graph[grow down=0.65cm, branch right=0.45cm, nodes={circle,inner sep=0,minimum size=0cm}]{
            c1/$\natom{C}{1}$ [xshift=2.0cm] --[strongteal] {
              n2/$\natom{N}{2}$ [xshift=1.6cm] --[double, strongteal] c3/$\natom{C}{3}$ [xshift=1.2cm] --[strongteal] n4/$\natom{N}{4}$ [xshift=0.8cm] --[double, strongteal] c5/$\natom{C}{5}$ [xshift=0.4cm] --[strongteal] {
                n6/$\natom{N}{6}$,
                tr_c7/$\bar{\natom{C}{7}}$ [xshift=.4cm,semitransparent],
              },
              c7/$\natom{C}{7}$ [xshift=1.6cm, > double] --[strongteal] {
                tr_c5/$\bar{\natom{C}{5}}$ [xshift=1.2cm,semitransparent],
                n8/$\natom{N}{8}$ [xshift=1.6cm] --[double, strongteal] c9/$\natom{C}{9}$ [xshift=2.0cm] --[strongteal] n10/$\natom{N}{10}$ [xshift=2.4cm] --[strongteal] tr_c1/$\bar{\natom{C}{1}}$ [xshift=2.8cm,semitransparent]
              },
              tr_n10/$\bar{\natom{N}{10}}$ [xshift=1.8cm,semitransparent],
            };
};
        \end{tikzpicture}
      }
      \end{center}
    \end{multicols*}
  \end{minipage}
  \caption{Tree representation of adenine and molecular tree with replaced cycle edges}\label{fig-example-adenine-VC}
\end{figure}
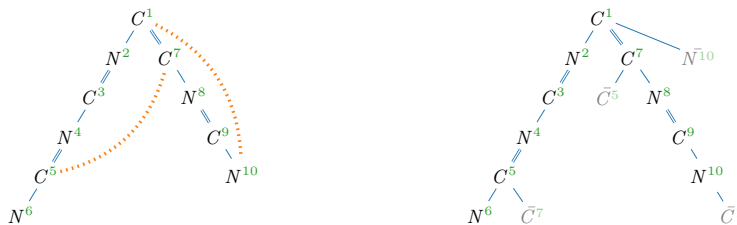
\begin{example}[Continuation of Example~\ref{example:problem}]
    Figure~\ref{fig-example-adenine-VC} shows the tree representation $G$ of Adenine on the left,
    which has two cycle edges (depicted in orange).
    Modifying it by adding two fresh nodes each, the molecular tree $\opfont{tr}(G)$ shown on the right emerges.
\end{example}

Given two tree representations $G_1$ and $G_2$, we define $G_1\prec G_2$ if
$\opfont{tr}(G_1)\prec \opfont{tr}(G_2)$. This does not define a total order, since $\opfont{tr}$ is not
injective. However, on any set of tree representations with the same number of vertices (and especially
on any set of isomorphic tree representations), $\opfont{tr}$ is injective and $\prec$ is total.

Though $\prec$ defines a largest tree representation of any molecular graph, it is impractical to
consider every possible such representation in search of this optimum. We therefore restrict
to tree representations where the tree edges are identified by iterative addition of longest simple paths
that do not create cycles.

\begin{definition}\label{def_refinement}
A \emph{pre-tree representation} is a molecular graph $G=\tuple{V, E, \ell, b}$ where $E$ is a disjoint union $E=T\cup C$
such that the edges of $T$ define a tree (possibly not a spanning tree for $G$).

An \emph{extension} of $G$ is a simple path $v_1,\ldots,v_n$ 
such that $v_1\in T$ and $v_2,\ldots,v_n\in C$
or $v_1,\ldots,v_n \in C$ if $T = \emptyset$.
A \emph{longest extension} is one of maximal length among all extensions of $G$.
A \emph{refinement} of $G$ is a pre-tree representation $G'=\tuple{V, T'\cup C', \ell, b}$, where
$T'=T\cup P$ and $C'=C\setminus P$ for a set of edges $P$ of some longest extension of $G$.
\end{definition}

We can view any molecular graph as a \emph{pre-tree representation} with $T=\emptyset$ and refine it
iteratively. Refinements exist whenever there is a vertex that is not reached in $T$. Hence,
a pre-tree representation admits no further refinement exactly if it is a tree representation.

\vspace*{2em}%
\begin{figure}[ht]
  \begin{multicols*}{3}
    \begin{center}
      \scalebox{1.0}{
        \begin{tikzpicture}
          \tikzset{>=latex}
          \node[] (1) at (0,0) {$\natom{C}{1}$};
          \node[] (2) at (-1,-1) {$\natom{C}{2}$};
          \node[] (3) at (0,-2) {$\natom{C}{3}$};
          \node[] (4) at (1,-1) {$\natom{C}{4}$};
          \node[] (5) at (2,-2) {$\natom{N}{5}$};
          \node[] (6) at (0,-3) {$\natom{O}{6}$};

          \draw[thick, dotted, strongorange,double distance=0pt] (1)--(2);
          \draw[thick, dotted, strongorange,double distance=0pt] (2)--(3);
          \draw[thick, dotted, strongorange,double distance=0pt] (3)--(4);
          \draw[thick, dotted, strongorange,double distance=0pt] (3)--(6);
          \draw[thick, dotted, strongorange,double distance=0pt] (1)--(4);
          \draw[thick, dotted, strongorange,double distance=0pt] (4)--(5);
        \end{tikzpicture}
      }
    \end{center}
    \columnbreak
    \begin{center}
      \scalebox{1.0}{
        \begin{tikzpicture}
          \tikzset{>=latex}
          \node[] (1) at (0,0) {$\natom{C}{1}$};
          \node[] (2) at (-1,-1) {$\natom{C}{2}$};
          \node[] (3) at (0,-2) {$\natom{C}{3}$};
          \node[] (4) at (1,-1) {$\natom{C}{4}$};
          \node[] (5) at (2,-2) {$\natom{N}{5}$};
          \node[] (6) at (0,-3) {$\natom{O}{6}$};

          \draw[strongteal] (1)--(2);
          \draw[strongteal] (2)--(3);
          \draw[strongteal] (3)--(4);
          \draw[thick, dotted, strongorange,double distance=0pt] (3)--(6);
          \draw[thick, dotted, strongorange,double distance=0pt] (1)--(4);
          \draw[strongteal] (4)--(5);
        \end{tikzpicture}
      }
    \end{center}
    \columnbreak
    \begin{center}
      \scalebox{1.0}{
        \begin{tikzpicture}
          \tikzset{>=latex}
          \node[] (1) at (0,0) {$\natom{C}{1}$};
          \node[] (2) at (-1,-1) {$\natom{C}{2}$};
          \node[] (3) at (0,-2) {$\natom{C}{3}$};
          \node[] (4) at (1,-1) {$\natom{C}{4}$};
          \node[] (5) at (2,-2) {$\natom{N}{5}$};
          \node[] (6) at (0,-3) {$\natom{O}{6}$};

          \draw[strongteal] (1)--(2);
          \draw[strongteal] (2)--(3);
          \draw[strongteal] (3)--(4);
          \draw[strongteal] (3)--(6);
          \draw[thick, dotted, strongorange,double distance=0pt] (1)--(4);
          \draw[strongteal] (4)--(5);
        \end{tikzpicture}
      }
    \end{center}
  \end{multicols*}
  \caption{Refinement steps on a molecule with dotted edges in $C$ and solid edges in $T$}\label{fig:refinement}
\end{figure}
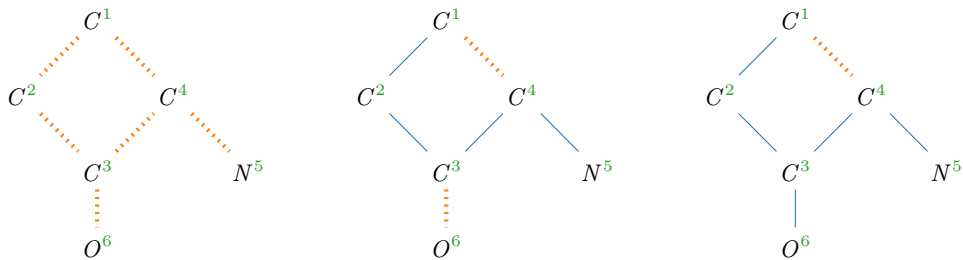
\begin{example}\label{example:refinement}
  Figure~\ref{fig:refinement} visualizes how a pre-tree representation
  is refined along the lines of Definition~\ref{def_refinement},
  starting with $T = \emptyset$ depicted on the left and
  arriving at a spanning tree in the rightmost depiction.
\end{example}

\begin{definition}\label{def_canon_graph}
A \emph{maximal refinement} of a molecular graph $G$ is a tree representation that is obtained from $G$
by a finite sequence of refinements. A \emph{centralized maximal refinement} is a tree representation
obtained from a maximal refinement by a permutation of vertices such that the root is central in its spanning tree (analogous to Definition~\ref{def_canon_tree}).
The canonical tree representation of a molecular graph $G$ is its $\prec$-largest centralized maximal refinement.
\end{definition}

\noindent In particular, the canonical tree representation coincides with the canonical molecular tree if $G$ is 
free of cycles.

\begin{figure}[ht]
  \begin{multicols*}{2}
    \begin{center}
      \scalebox{1.0}{
        \begin{tikzpicture}
          \tikzset{>=latex}
          \node[] (1) at (0,0) {$\natom{C}{1}$};
          \node[] (2) at (-1,-1) {$\natom{C}{2}$};
          \node[] (3) at (0,-2) {$\natom{C}{3}$};
          \node[circle,draw=red,inner sep=0,minimum size=.6cm] (_3) at (0,-2) {};
          \node[] (4) at (1,-1) {$\natom{C}{4}$};
          \node[] (5) at (2,-2) {$\natom{N}{5}$};
          \node[] (6) at (0,-3) {$\natom{O}{6}$};

          \draw[strongteal] (1)--(2);
          \draw[strongteal] (2)--(3);
          \draw[strongteal] (3)--(4);
          \draw[strongteal] (3)--(6);
          \draw[thick, dotted, strongorange,double distance=0pt] (1)--(4);
          \draw[strongteal] (4)--(5);
        \end{tikzpicture}
      }
    \end{center}
    \columnbreak
    \begin{center}
      \begin{tikzpicture}
          \tikzset{>=latex}
          \graph[grow down=0.65cm, branch right=0.45cm, nodes={circle,inner sep=0,minimum size=0cm}]{
            c1/$\natom{C}{1}$ [xshift=3.0cm, draw=red] --[strongteal] {
              c2/$\natom{C}{2}$ [yshift=-.4cm,xshift=2cm] --[strongteal] c3/$\natom{C}{3}$ [yshift=-.8cm,xshift=1cm],
              c4/$\natom{C}{4}$ [yshift=-.4cm,xshift=2.5cm] --[strongteal] n5/$\natom{N}{5}$ [yshift=-.8cm,xshift=2.45cm],
              o6/$\natom{O}{6}$ [yshift=-.4cm,xshift=3.2cm],
            };
            c3/$\natom{C}{3}$ --[thick,bend right, dotted, strongorange,double distance=0pt] c4/$\natom{C}{4}$;
          };
        \end{tikzpicture}
    \end{center}
  \end{multicols*}
  \vspace*{-1em}%
  \caption{Centralized maximal refinement (right) of the maximal refinement (left)}\label{fig:centr-max-refinement}
\end{figure}
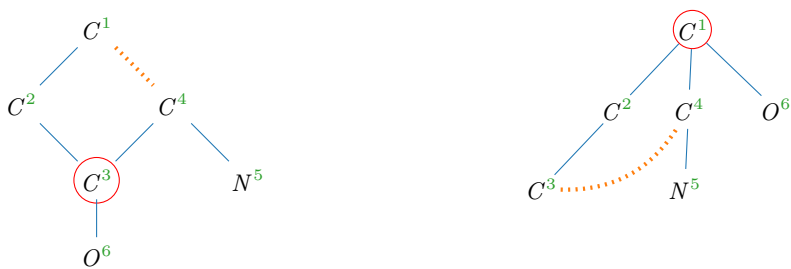
\vspace*{-1em}%
\begin{example}
  Figure~\ref{fig:centr-max-refinement} shows how the maximal refinement from Example~\ref{example:refinement}
  can be centralized on the circled vertex, permutating labels $\natom{C}{3}$ and $\natom{C}{1}$.
\end{example}

 \newcounter{linecounter}\newcounter{lastlinecounter}\setcounter{linecounter}{1}\newcommand\ASPlineInc[1]{\setcounter{lastlinecounter}{\value{linecounter}}\setcounter{linecounter}{\value{linecounter} + #1}}\newcounter{tmplinecounter}\newcommand\ASPline[1][1]{\setcounter{tmplinecounter}{\value{linecounter} + #1 - 1}\thetmplinecounter }

\newcommand\ASPLastline[1][1]{\setcounter{tmplinecounter}{\value{lastlinecounter} + #1 - 1}\thetmplinecounter }

\section{ASP Implementation}\label{sec-impl}

We compute the $\prec$-largest molecular graphs 
using Answer Set Programming (ASP),
a declarative logic programming language 
that is well-suited for combinatorial search problems.
Compared to a direct implementation,
ASP offers two main advantages: 
the solver efficiently handles traversal of the search space,
and ASP can serve as a domain-specific language to encode
additional constraints on the molecules.

\subsection{Answer Set Programming}

We give a brief introduction to Answer Set Programming in this section,
and direct interested readers to \cite[]{ASPIntro,ASPPrimer} for a comprehensive treatment.
ASP programs consist of facts, rules, and constraints that encode a problem,
with stable models corresponding to solutions.
To illustrate, consider the following simple ASP program:
\begin{ASPminted}[\thelinecounter]
ion(sodium). ion(chloride). ion(potassium).
1 { selected(I) : ion(I) } 2.
forms_salt :- selected(sodium), selected(chloride).  
:- selected(potassium), selected(chloride).
\end{ASPminted}
Line 2 of the above program declares three available ions as facts.
Line 4 contains a choice rule, 
selecting between one and two \ASPinline{ion} facts.
The rule on line 6 is read from right to left,
specifying that \ASPinline{from_salt} is derived 
if sodium and chloride were selected.
The constraint on line 8 filters out 
solutions that contain both potassium and chloride.

Evaluation of ASP programs follows the guess-and-check paradigm:
choice rules or negations are used to ``guess'' candidate solutions,
which are then tested by rules and constraints
to determine whether they constitude valid solutions.
For our example program,
we obtain the following solutions: 
\begin{align*}
&\{ \text{\ASPinline{ion(sodium)}} \},   
\{ \text{\ASPinline{ion(chloride)}}\},   
\{ \text{\ASPinline{ion(potassium)}} \} \\
&\{ \text{\ASPinline{ion(sodium)}},\, \text{\ASPinline{ion(potassium)}} \} \\
&\{ \text{\ASPinline{ion(sodium)}},\, \text{\ASPinline{ion(chloride)}},\, \text{\ASPinline{forms_salt}} \}
\end{align*}

Although the order of rule appearance in ASP programs is of no consequence semantically,
their logical interdependence prescribes an intuitive sequence of consequences.
For ease of presentation, we therefore may use vocabulary such as ``first'' and ``then''
in description of such programs where suitable.

Modern solvers like clingo \cite[]{clingo} in addition allow for more advanced features,
of which we make occasional use in our encoding.
This includes aggregates like
\ASPinline{M = }\begin{tikzpicture}
    \clip (-63pt,-4pt) rectangle (70pt,4pt);
    \node [align=center] (text) {\ASPinline{#min { X : condition(X) }}};
\end{tikzpicture},
which compute e.g. the minimum in a set of values,
syntactic sugar like range notation \ASPinline{fact(1..N)}, 
which expands to \ASPinline{fact(1)}, \dots, \ASPinline{fact(N)},
and Python script blocks that allow us to define custom functions.

\subsection{Application to Molecular Graph Generation}

Our implementation incorporates many of the conditions on canonical 
tree representations in the rules that infer these structures, rather than
relying on constraints to filter redundant representations later.
This is in contrast to other known approaches \cite[]{GJR20} for tree generation,
which first guess an edge relation and then prune inadequate graphs,
thereby producing many isomorphic results.
For molecular trees, our implementation achieves full symmetry-breaking along
the lines of Definition~\ref{def_canon_tree}. For graphs with cycles,
we merely approximate the conditions from Definition~\ref{def_canon_graph},
since the required $\prec$-maximality in this case seems to require a computationally
prohibitive search in ASP.\footnote{Section~\ref{sec_experiments} shows that the reduction in symmetry is still significant.}
It proceeds in the following steps:
\begin{enumerate}
    \item Choosing main-chain length, number of multi-bonds and cycle edges
    \item Distributing the element symbols, multi-bonds and cycle-edges on the vertices
    \item Building the spanning tree
    \item Comparing sibling subtrees and ensure they are $\prec$-decreasing from left to right
    \item Pruning representations that cannot be $\prec$-maximal due to bad choice of cycle edges
\end{enumerate}

As in Definition~\ref{def:tree-representation}, our implementation identifies vertices with
integers, with \ASPinline{1} being the root. Input molecular formulas are encoded in
facts \ASPinline{molecular_formula(|$e$|,|$f(e)$|)} and \ASPinline{element(|$e$|,|$n_e$|,|$\mathbb{V}(e)$|)},
for elements $e\in\mathbb{E}$ with atomic number $n_e$ (for usual ordering) and valence $\mathbb{V}(e)$.
We encode tree representations $G = \tuple{V, T \cup C, \ell, b}$
by facts \ASPinline{atom(|$v$|)} and \ASPinline{symbol(|$v$|, |$\ell(v)$|)} for $v \in V$;
\ASPinline{parent(|$v$|, |$i$|, |$v'$|)} for $\{v, v'\}\in T$ where $v'$ is the $(i+1)$th child of $v$;
and \ASPinline{cycle_start(|$v$|, |$c$|)} and \ASPinline{cycle_end(|$v'$|, |$c$|)} for $\{ v, v' \} \in C$ with $v < v'$,
where $c$ is a unique integer id for this edge.
Multiplicities of bonds $b(\{ v, v' \})$ are encoded only if $b(\{ v, v' \})>1$:
for $\{ v, v' \} \in T$, we associate bonds with the child $w=\max \{ v, v' \}$ 
using \ASPinline{multi_bond(|$w$|, |$b(\{ v, v' \})$|)}, 
whereas for $\{ v, v' \} \in C$, 
we encode $b(\{ v, v' \})$ 
single-bond cycles for the same $\{ v, v' \}$, which showed better performance in this case.
See \ref{app:pred-list} for a tabular overview of all predicates used in our encoding.

\begin{example}[Continuation of Example~\ref{example:problem}]
    The encoding for the sum formula $\textit{C}_5\textit{H}_5\textit{N}_5$ would look like: \\
    \vphantom{|}\indent{\footnotesize
    \ASPinline{molecular_formula("C",5).}
    \ASPinline{molecular_formula("H",5).}
    \ASPinline{molecular_formula("N",5).}} \\
    It necessitates the following valence information:\\
    \vphantom{|}\indent{\footnotesize
    \ASPinline{element("C", 6, 4).}
    \ASPinline{element("H", 1, 1).}
    \ASPinline{element("N", 7, 3).}} \\
    Matching this sum formula, Adenine (see also Figure~\ref{fig-example-adenine}) would be encoded like so:\\
\indent\begin{minipage}{.7\textwidth}
    \vphantom{|}\footnotesize
    \setlength{\tabcolsep}{1.5pt}
    \begin{tabular}{@{\indent}llll@{}}
        \ASPinline{symbol( 1,"C").} & & &
        \ASPinline{cycle_start(1,1).} \\
        \ASPinline{symbol( 2,"N").} &
        \ASPinline{parent(1,0, 2).} & & \\
        \ASPinline{symbol( 3,"C").} &
        \ASPinline{parent(2,0, 3).} &
        \ASPinline{multi_bond(3,2).} & \\
        \ASPinline{symbol( 4,"N").} &
        \ASPinline{parent(3,0, 4).} & & \\
        \ASPinline{symbol( 5,"C").} &
        \ASPinline{parent(4,0, 5).} &
        \ASPinline{multi_bond(5,2).} &
        \ASPinline{cycle_start(5,2).} \\
        \ASPinline{symbol( 6,"N").} &
        \ASPinline{parent(5,0, 6).} & & \\
        \ASPinline{symbol( 7,"C").} &
        \ASPinline{parent(1,1, 7).} &
        \ASPinline{multi_bond(7,2).} &
        \ASPinline{cycle_end(7,2).} \\
        \ASPinline{symbol( 8,"N").} &
        \ASPinline{parent(7,0, 8).} & & \\
        \ASPinline{symbol( 9,"C").} &
        \ASPinline{parent(8,0, 9).} &
        \ASPinline{multi_bond(9,2).} & \\
        \ASPinline{symbol(10,"N").} &
        \ASPinline{parent(9,0,10).} & &
        \ASPinline{cycle_end(10,1).} \\
    \end{tabular}
    \end{minipage}
    \begin{minipage}{.26\textwidth}
    \begin{center}
      \scalebox{0.7}{
        \begin{tikzpicture}
          \tikzset{>=latex}
          \graph[grow down=0.65cm, branch right=0.45cm, nodes={circle,inner sep=0,minimum size=0cm}]{
            c1/$\natom{C}{1}$ [xshift=2.0cm] --[strongteal] {
              n2/$\natom{N}{2}$ [xshift=1.6cm] --[double, strongteal] c3/$\natom{C}{3}$ [xshift=1.2cm] --[strongteal] n4/$\natom{N}{4}$ [xshift=0.8cm] --[double, strongteal] c5/$\natom{C}{5}$ [xshift=0.4cm] --[strongteal] n6/$\natom{N}{6}$,
              c7/$\natom{C}{7}$ [xshift=2.0cm, > double] --[strongteal] n8/$\natom{N}{8}$ [xshift=2.4cm] --[double, strongteal] c9/$\natom{C}{9}$ [xshift=2.8cm] --[strongteal] n10/$\natom{N}{10}$ [xshift=3.2cm]
            };
c5/$\natom{C}{5}$ --[thick,bend right, dotted, strongorange,double distance=0pt] c7/$\natom{C}{7}$;
            c1/$\natom{C}{1}$ --[thick,bend left, dotted, strongorange,double distance=0pt] n10/$\natom{N}{10}$;
          };
        \end{tikzpicture}
      }
   \end{center}
    \end{minipage}
\end{example}

Given the input, we guess facts for \ASPinline{symbol},
\ASPinline{multi_bond}, \ASPinline{cycle_start}, 
as well as \ASPinline{cycle_end}.
For efficiency, we avoid aggregates and instead proceed iteratively, updating counters as we make
guesses. We constrain possible guesses based on Definition~\ref{def:validgraph}.
For example, given a molecular formula $f$, the number of ``additional'' bonds used in cycles and multi-bonds
\begin{equation*}
    |C| + \sum_{\{v, v'\} \in T \cup C} \big( b(\{v, v'\})-1 \big)
\end{equation*}
known as the \emph{degree of unsaturation} in chemistry,
can be computed as
\begin{equation*}
    1 + \frac{1}{2} \sum_{e \in \mathbb{E}} f(e) \cdot (\mathbb{V}(e) - 2). 
\end{equation*}

When guessing multi-bonds and cycles, we therefore ensure that this number is met.
Multi-bonds \ASPinline{multi_bond(|$v$|,2)} or \ASPinline{multi_bond(|$v$|,3)} 
are guessed for non-root vertices $v$.\footnote{Higher bond multiplicities are not implemented in our prototype.}
For cycle markers, first we guess the number of \ASPinline{cycle_start}s at each vertex
and thereafter generate these facts with their unique cycle ids. Second, we guess the number of \ASPinline{cycle_end}s
at each vertex, making sure to not exceed the total count of \ASPinline{cycle_start}s at smaller vertices.
Using additional constraints, we ensure that each cycle has a single end,
start vertices are always smaller than end vertices, cycles never span a single edge (which should be represented as a multi-bond instead),
and two cycle edges with the same start are indexed according to their end vertex.

This completes the initial guessing phase for $\ell$, $b$, and $C$.
Facts that have the form \ASPinline{preset_bonds(|$v$|,|$\opfont{pre}(v)$|)} store the number of
bonding places $\opfont{pre}(v)$ that have been used up in the process for vertex $v$.
In the next phase, the program specifies possible spanning trees to establish Definition~\ref{def:validgraph}~\ref{it_validgraph_connect}.
Choices are limited since we aim at $\prec$-maximal tree representations, e.g., 
the subtree depth cannot increase from left to right.

We first guess the length of the longest path in the tree representation, whose central
elements are the only possible roots by Definition~\ref{def_canon_graph}.
This length is encoded as \ASPinline{main_chain_len(|$\mathit{length}$|)}.
It ranges from $1$ to $|V|$, but performance is gained by a better lower bound estimate:

\begin{ASPminted}[\thelinecounter]
1{ main_chain_len(@min_main_chain_len(N)..N) }1 :- non_hydrogen_atom_count(N).
\end{ASPminted}
\ASPlineInc{1}

\begin{proposition}
    Let $f$ be a molecular formula with 
    $N \coloneqq \sum_{e \in \mathbb{E} \setminus \{H\}} f(e)$ 
    non-hydrogen atoms,
    and let $G$ be a valid molecular graph for $f$.
    Given that the maximal valence 
    is $X=\max_{e\in \mathbb{E}}{\mathbb{V}(e)}$, $X > 2$, 
    a longest simple path in $G$
is at least of length
    \begin{tequation}\label{eq:min-main-chain-len}
        \min\left\{2 \cdot \left\lceil \log_{X-1}\left((X-2)\cdot \frac{N - 1}{X}+1\right)\right\rceil + 1,
                   2 \cdot \left\lceil \log_{X-1}\left((X-2)\cdot \frac{N}{2} + 1\right)\right\rceil \right\}
    \end{tequation}
\end{proposition}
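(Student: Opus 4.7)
The plan is to bound the atom count $N$ from above in terms of the length $p$ (in vertices) of a longest simple path $P = v_1,\dots,v_p$ of $G$, and then invert. The cases $p$ odd and $p$ even yield the two expressions inside the minimum of (\ref{eq:min-main-chain-len}), so since $p$ is one or the other, the proposition follows by taking the smaller bound.

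The core ingredient is an \emph{eccentricity lemma} relative to $P$: if $p = 2d+1$ with central vertex $c = v_{d+1}$, then $d_G(c,u) \leq d$ for every $u \in V$; and if $p = 2d$ with central vertices $c_1 = v_d$ and $c_2 = v_{d+1}$, then $\min(d_G(c_1,u),d_G(c_2,u)) \leq d-1$ for every $u \in V$. I prove it by contradiction: suppose the bound fails at some $u$, pick $w = v_i$ on $P$ minimising $d_G(u,w)$, and let $Q$ be a shortest $u$--$w$ path; by the choice of $w$, $Q$ meets $P$ only at $w$. WLOG $i \geq \lceil (p+1)/2 \rceil$ (flip $P$ otherwise). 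The triangle inequality combined with the hypothesis gives $|Q| \geq 2d-i+2$ in the odd case and $|Q| \geq 2d-i+1$ in the even case. Concatenating $Q$ with the prefix $v_i,v_{i-1},\dots,v_1$ then yields a simple path of edge-length $|Q|+(i-1) \geq 2d+1$ (odd) or $\geq 2d$ (even), strictly exceeding the edge-length $p-1$ of $P$, a contradiction.

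Given the eccentricity lemma, I bound $N$ by a BFS-tree count. Every atom has at most $X$ distinct neighbours in $G$, so a BFS tree of depth $D$ rooted at a single vertex contains at most $1 + X + X(X-1) + \dots + X(X-1)^{D-1} = 1 + X \cdot \frac{(X-1)^D - 1}{X-2}$ vertices (the root contributes up to $X$ children, every other vertex at most $X-1$ because one slot is used by its parent). In the odd case, $D = d$ gives $N \leq 1 + X \cdot \frac{(X-1)^d - 1}{X-2}$. In the even case I instead run BFS from the \emph{edge} $c_1 c_2$: two roots at level $0$, each with at most $X-1$ children (one slot consumed by the edge to the other root), and tree depth $\leq d-1$, giving $N \leq \frac{2((X-1)^d - 1)}{X-2}$. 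Isolating $(X-1)^d$, taking $\log_{X-1}$, and using that $d$ is a non-negative integer to pull in a ceiling yields $p = 2d+1 \geq 2\lceil \log_{X-1}((X-2)\tfrac{N-1}{X}+1)\rceil + 1$ in the odd case, and $p = 2d \geq 2\lceil \log_{X-1}((X-2)\tfrac{N}{2}+1)\rceil$ in the even case.

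The main obstacle is the eccentricity lemma: the concatenation must genuinely form a simple path (ensured by the \emph{minimal}-$w$ choice of $w$) and the arithmetic in the even case has only one edge of slack compared to the odd case, so the bound on $|Q|$ must be sharp enough to beat $|P|$ by at least one edge. Everything that follows -- the geometric-series sum, the rearrangement, and the logarithmic inversion -- is routine algebra.
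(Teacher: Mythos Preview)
Your proof is correct, and the algebra in both cases matches the paper's formulas exactly. The route, however, differs from the paper's in its key step. The paper argues by constructing an extremal reference tree $G'$ on $N$ vertices in which every non-leaf has exactly $X$ neighbours, asserts (with the one-line justification ``it is only increased by cycles, multi-bonds and lower-valence atoms'') that any valid $G$ has a longest path at least as long as that of $G'$, and then computes the longest path of $G'$ in the odd- and even-depth cases. You instead work directly inside $G$: your eccentricity lemma shows that the centre (or central edge) of a longest path sees every vertex within distance $d$ (respectively $d-1$), and the BFS-tree count then bounds $N$ by the same geometric sums. Your argument is more self-contained and rigorous---in particular, it makes precise exactly the step the paper leaves informal, namely why the complete $X$-regular tree is the worst case---while the paper's version is shorter to state once one accepts the extremality claim. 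Both approaches hinge on the same odd/even split and the same closed-form vertex counts, so the final inversion to the logarithmic bound is identical.
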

\begin{proof}
    Consider a tree $G'$ with $N$ vertices where all nodes except the leaves have $X$ neighbors and let $l$ be the length of the longest simple paths in $G'$.
    A longest simple path in $G$ is at least of length $l$, as it is only increased by cycles, multi-bonds and lower-valence atoms.
    The two expressions in the $\min\{\cdot,\cdot\}$ estimate the minimal lengths
    in the case that $l$ is odd or even, respectively.

    In case $l$ is odd, $G'$ is a complete tree where the root has $X$ children and inner non-root vertices have $X-1$ children.
    Let $s_n^{odd}$ be the number of vertices in $G'$ if it has depth $n$:
    \begin{talign}\label{eq:sn_odd}
        s_n^{odd} = 1 + X \cdot \frac{(X-1)^{n-1}-1}{(X-2)}
    \end{talign}
    There is one root vertex which has $X$ subtrees of depth $n-1$.
    The longest path in this tree traverses first $n-1$ vertices in one of these $X$ subtrees,
    then the root vertex, and finally another $n-1$ vertices in another child tree,
    totalling $2 \cdot ( n-1 ) + 1$ nodes.\\
    Rearranging Equation~\ref{eq:sn_odd} to $2 \cdot ( n-1 )  +1$,
    we get $2 \cdot \left( \log_{X-1}\left((X-2)\cdot \frac{s_n^{odd} - 1}{X}+1\right)\right) + 1$.

    In case $l$ is even, $G'$ is combined of two equal trees where inner vertices have $X-1$ children.
    Let $s_n^{even}$ be the number of vertices in $G'$ if both trees have depth $n$:
    \begin{talign}\label{eq:sn_even}
        s_n^{even} = 2 \cdot \frac{(X-1)^{n}-1}{(X-2)}
    \end{talign}
    The longest path in this tree traverses first $n$ vertices in one and then further $n$ vertices in the other tree.
    Rearranging Equation~\ref{eq:sn_even} to $2 \cdot n $ yields $2 \cdot \left( \log_{X-1}\left((X-2)\cdot \frac{s_n^{even}}{2} + 1\right)\right)$.

    As the logarithm may give fractional values,
    both terms have to be rounded up: the minimal length of a longest path in $G$ is the next integer.
\end{proof}

We use a Python \texttt{\#script}-block during grounding to compute this lower bound procedurally (as this is not natively feasible in ASP).

Next, we iteratively guess \ASPinline{depth(|$v$|,|$d$|)}, \ASPinline{size(|$v$|,|$s$|)}, and \ASPinline{branching(|$v$|,|$b$|)} for $v\in V$,
where $b$ is the number of children of $v$, and $d$ and $s$ are the depth and size of the subtree with root $v$.
We require $d\leq s$ and $1\leq b\leq(\mathbb{V}(\ell(v))-\opfont{pre}(v))$.
Moreover, if $d>1$ then $b\geq 1$, and $b\geq 2$ for the root \ASPinline{1} unless $\left|V\right|\leq 2$.
The rules for \ASPinline{branching} are:

\vspace*{1em}%
\begin{ASPminted}[\thelinecounter]
branching(1, 1) :- non_hydrogen_atom_count(2).
1{ branching(1, 2..MAX) }1
    :- not branching(1, 1), symbol(1, E), element(E, _, VALENCE),
       MAX = #min{ N-1 : non_hydrogen_atom_count(N);
                   V-B : V=VALENCE, preset_bonds(1, B) }.
\end{ASPminted}
\ASPlineInc{5}\begin{ASPminted}[\thelinecounter]
1{ branching(I, 1..MAX) }1
    :- symbol(I, E), I>1, element(E, _, VALENCE),
       MAX = #min{ S-D+1 : S=SIZE, D=DEPTH;
                     V-B : V=VALENCE, preset_bonds(I, B) },
       size(I, SIZE), SIZE >= DEPTH, depth(I, DEPTH), DEPTH > 1.
\end{ASPminted}
\ASPlineInc{5}

At this point, the used-up binding places due to \ASPinline{multi_bond}s at child vertices
are captured in a fact \ASPinline{postset_bonds(|$v$|,|$\opfont{post}(v)$|)}.
Definition~\ref{def:validgraph}~\ref{it_validgraph_valence} is equivalent to a check of $\opfont{pre}(v) + \opfont{post}(v) \leq \mathbb{V}(\ell(v))$ for each $v \in V$.

Next, we split the main chain evenly between the first two children of root \ASPinline{1},
which have indices $2$ and $2+\opfont{size}(2)$. If the length is odd, the first child's depth is greater by $1$:

\begin{ASPminted}[\thelinecounter]
depth(2, ((MAIN_CHAIN_LEN-1)+(MAIN_CHAIN_LEN-1)\2)/2)
    :- main_chain_len(MAIN_CHAIN_LEN), MAIN_CHAIN_LEN > 1.
depth(2+LEFT_SIZE, ((MAIN_CHAIN_LEN-1)-(MAIN_CHAIN_LEN-1)\2)/2)
    :- main_chain_len(MAIN_CHAIN_LEN), MAIN_CHAIN_LEN > 2,
       size(2, LEFT_SIZE).
\end{ASPminted}
\ASPlineInc{5}

In general, the depth of a first child is always set to its parent's depth minus $1$. Depths for further children are chosen iteratively to be non-increasing.

\begin{ASPminted}[\thelinecounter]
depth(I+1, DEPTH-1)
    :- depth(I, DEPTH), atom(I+1), branching(I, _), DEPTH > 1.
1{ depth(POS_2, 1..PREV_DEPTH) }1
    :- branching(I, BRANCHING), BRANCHING > CHAIN,
       depth(POS_1, PREV_DEPTH),
       parent(I, CHILD_NR, POS_1), parent(I, CHILD_NR+1, POS_2).
\end{ASPminted}
\ASPlineInc{6}

The first child of a non-final vertex $v$ is always $v+1$ (line \ASPline[1] below).
Vertex ids for further children are chosen iteratively such that their left neighbor can reach its
depth and the parent's size is not exceeded (lines \ASPline[2]--\ASPline[7]).
These choices also determine the \ASPinline{size} of each child (not shown).

\begin{ASPminted}[\thelinecounter]
parent(I, 0, I+1) :- branching(I, _), non_hydrogen_atom_count(N), I < N.
1{ parent(I, CHILD_NR, SUM+DEPTH..MAX_CHILD) }1
    :- parent(I, CHILD_NR-1, SUM), depth(SUM, DEPTH), size(I, PARENT_S),
       branching(I, BRANCHING), BRANCHING > CHILD_NR,
       MAX_CHILD = #min{ N : non_hydrogen_atom_count(N);
                         T : T=I+PARENT_S-BRANCHING+CHILD_NR },
       MAX_CHILD >= SUM+DEPTH.
\end{ASPminted}
\ASPlineInc{7}

Next, we materialize the total order $\prec$ from Section~\ref{sec_canon_tree}
in a predicate \ASPinline{lt}. For graphs with cycles, we use the number of cycle markers
per vertex as an additional ordering criterion instead of the (more costly)
tree transformation of Section~\ref{sec_canon_graph}.
The following constraints exclude cases that cannot be $\prec$-maximal,
due to children traversed in $\prec$-increasing order (line \ASPline[1]) or
choice of a non-optimal central vertex as root (line \ASPline[2]).

\begin{ASPminted}[\thelinecounter]
:- parent(I, CHILD_NR, I1), parent(I, CHILD_NR+1, I2), lt(I1, I2).
:- main_chain_len(MAIN_CHAIN_LEN), MAIN_CHAIN_LEN\2 = 0, lt(1, 2).
\end{ASPminted}
\ASPlineInc{2}

At this point, perfect symmetry-breaking for acyclic graphs has been achieved.
Cyclic graphs, however, can still have isomorphic representations,
since the implementation (a) does not compare all possible choices of main chain, and
(b) does not ensure that tree edges are obtained from longest extensions
as in Definition~\ref{def_refinement}.
For (b), repeated longest path computations are impractical, but we can
heuristically eliminate many non-optimal choices by excluding obvious violations.
\begin{definition}\label{def:shortening-cycle-edge}
    Let $G=\tuple{V,T\cup C,\ell,b}$ be a tree representation with cycle edge $e=\{v_1,v_2\}$.
    Let $P=v_1, \ldots, v_2$ be the unique path in $G$ that consists only of tree edges.
    We say that $e$ is \textit{shortening}, if an $e' \in P$ exists, s.t.
    $G'=\tuple{V,T'\cup C',\ell,b}$ with $T'=\left(T \setminus \{e'\}\right) \cup \{e\}$ and $C'=\left(C \setminus \{e\}\right) \cup \{e'\}$ is deeper.
\end{definition}
\begin{figure}
    \centering
    \scalebox{.8}{\begin{minipage}{\textwidth}\centering\tikzset{placeholder/.style={minimum size=0, inner sep=0},
         place/.style args={#1 of #2}{below #1=0.17cm and 0.17cm of #2, placeholder},
         dot/.style={minimum size=0.1cm, inner sep=0, fill, circle},
         etc/.style={dotted, thick},
         highlight/.style={rounded corners, line cap=round, line width=0.3cm},
         innerhighlight/.style={rounded corners, line cap=round, line width=0.1cm}}

\pgfdeclarelayer{bg}    \pgfsetlayers{bg,main}  
\newenvironment*{CycleImg}[1][]{
\begin{tikzpicture}
    \node[] at (1cm,0) {(#1)};

    \node[draw, circle] (q1) {$1$};
    \node[below left=0.10cm and 0.10cm of q1, placeholder] (q2) {};
    \node[place=left of q2] (q3) {};
    \node[below left=0.30cm and 0.30cm of q3, placeholder] (q4) {};
    \node[below left=0.05cm and 0.05cm of q4, dot] (q5) {};
    \node[above left=0.02cm of q5, inner sep=0] (q5_label) {$I$};
    \node[below left=1.00cm and 1.00cm of q5, dot] (q6) {};
    \node[above left=0.02cm of q6, inner sep=0] (q6_label) {$I'$};
    \node[below left=0.30cm and 0.30cm of q6, placeholder] (q7) {};
    \node[place=left of q7] (q8) {};

    \node[below right=0.20cm and 0.20cm of q6, placeholder] (q9) {};
    \node[place=right of q9] (q10) {};
    \node[above right=0.01cm of q10] (q10_label) {$l_1$};
    \node[below right=0.20cm and 0.20cm of q10, dot] (q11) {};
    \node[below left=0.02cm of q11, inner sep=0] (q11_label) {$I_1$};
    \node[below right=0.20cm and 0.20cm of q11, placeholder] (q12) {};
    \node[place=right of q12] (q13) {};

    \node[below right=0.20cm and 0.20cm of q4, placeholder] (q14) {};
    \node[place=right of q14] (q15) {};
    \node[below left=0.01cm of q15] (q15_label) {$l_2$};
    \node[below right=0.20cm and 0.20cm of q15, dot] (q16) {};
    \node[above right=0.02cm of q16, inner sep=0] (q16_label) {$I_2$};
    \node[below right=0.20cm and 0.20cm of q16, placeholder] (q17) {};
    \node[place=right of q17] (q18) {};

    \node[below right=0.10cm and 0.10cm of q1, placeholder] (q19) {};
    \node[place=right of q19] (q20) {};

    \draw[] (q1) -- (q2);
    \draw[etc] (q2) -- (q3);
    \draw[] (q3) -- (q4);
    \draw[] (q4) -- (q5);
    \draw[] (q5) -- (q6);
    \draw[] (q6) -- (q7);
    \draw[etc] (q7) -- (q8);

    \draw[] (q6) -- (q9);
    \draw[etc] (q9) -- (q10);
    \draw[] (q10) -- (q11);
    \draw[] (q11) -- (q12);
    \draw[etc] (q12) -- (q13);

    \draw[] (q4) -- (q14);
    \draw[etc] (q14) -- (q15);
    \draw[] (q15) -- (q16);
    \draw[] (q16) -- (q17);
    \draw[etc] (q17) -- (q18);

    \draw[] (q1) -- (q19);
    \draw[etc] (q19) -- (q20);

    \draw[densely dotted, blue, thick] (q11) to[bend right] (q16);
}{
\end{tikzpicture}
}

\newenvironment*{CycleImgBg}[1][]{
\begin{CycleImg}[#1]
    \begin{pgfonlayer}{bg}
}{
    \end{pgfonlayer}
\end{CycleImg}
}

\newenvironment*{CrossCycleImg}[1][]{
\begin{tikzpicture}
    \node[] at (1.7cm,0) {(#1)};

    \node[draw, circle] (q1) {$1$};
    \node[placeholder] (q1_help) at (q1) {};
    \node[below left=0.10cm and 0.10cm of q1, placeholder] (q2) {};
    \node[place=left of q2] (q3) {};
    \node[below left=0.70cm and 0.70cm of q3, dot] (q4) {};
    \node[above left=0.02cm of q4, inner sep=0] (q4_label) {$I$};
    \node[below left=0.30cm and 0.30cm of q4, placeholder] (q5) {};
    \node[place=left of q5] (q6) {};

    \node[below right=0.20cm and 0.20cm of q4, placeholder] (q7) {};
    \node[place=right of q7] (q8) {};
    \node[above right=0.01cm of q8] (q8_label) {$l_1$};
    \node[below right=0.20cm and 0.20cm of q8, dot] (q9) {};
    \node[below left=0.02cm of q9, inner sep=0] (q9_label) {$I_1$};
    \node[below right=0.20cm and 0.20cm of q9, placeholder] (q10) {};
    \node[place=right of q10] (q11) {};

    \node[below right=0.10cm and 0.10cm of q1, placeholder] (q12) {};
    \node[place=right of q12] (q13) {};
    \node[below right=0.70cm and 0.70cm of q13, dot] (q14) {};
    \node[above right=0.02cm of q14, inner sep=0] (q14_label) {$I'$};
    \node[below right=0.30cm and 0.30cm of q14, placeholder] (q15) {};
    \node[place=right of q15] (q16) {};

    \node[below left=0.20cm and 0.20cm of q14, placeholder] (q17) {};
    \node[place=left of q17] (q18) {};
    \node[above left=0.01cm of q18] (q18_label) {$l_2$};
    \node[below left=0.20cm and 0.20cm of q18, dot] (q19) {};
    \node[below right=0.02cm of q19, inner sep=0] (q19_label) {$I_2$};
    \node[below left=0.20cm and 0.20cm of q19, placeholder] (q20) {};
    \node[place=left of q20] (q21) {};

    \node[below right=0.10cm and 0.30cm of q1, placeholder] (q22) {};
    \node[below right=0.10cm and 0.20cm of q22, placeholder] (q23) {};

    \draw[] (q1) -- (q2);
    \draw[etc] (q2) -- (q3);
    \draw[] (q3) -- (q4);
    \draw[] (q4) -- (q5);
    \draw[etc] (q5) -- (q6);

    \draw[] (q4) -- (q7);
    \draw[etc] (q7) -- (q8);
    \draw[] (q8) -- (q9);
    \draw[] (q9) -- (q10);
    \draw[etc] (q10) -- (q11);

    \draw[] (q1) -- (q12);
    \draw[etc] (q12) -- (q13);
    \draw[] (q13) -- (q14);
    \draw[] (q14) -- (q15);
    \draw[etc] (q15) -- (q16);

    \draw[] (q14) -- (q17);
    \draw[etc] (q17) -- (q18);
    \draw[] (q18) -- (q19);
    \draw[] (q19) -- (q20);
    \draw[etc] (q20) -- (q21);

    \draw[] (q1) -- (q22);
    \draw[etc] (q22) -- (q23);

    \draw[densely dotted, blue, thick] (q9) to[bend right] (q19);
}{
\end{tikzpicture}
}

\newenvironment*{CrossCycleImgBg}[1][]{
\begin{CrossCycleImg}[#1]
    \begin{pgfonlayer}{bg}
}{
    \end{pgfonlayer}
\end{CrossCycleImg}
}

\newenvironment*{CrossCycleImgSideBg}[1][]{
\begin{CrossCycleImg}[#1]
    \node[placeholder] (q_base) at ($(q4)+(0.5cm,0.5cm)$) {};
    \node[] (q_side) at ($(q_base)-(1.0cm,0.05cm)$) {};

    \draw[] (q_base) -- (q_side);

    \begin{pgfonlayer}{bg}
}{
    \end{pgfonlayer}
\end{CrossCycleImg}
}

\begin{subfigure}[c]{0.3\textwidth}
    \centering
\begin{CycleImgBg}[1]
    \path[draw=green!50, highlight] (q1) -- (q4) -- (q16) to[bend left] (q11) -- (q6) -- (q8);
    \path[draw=blue!50, innerhighlight] (q4) -- (q16);
    \path[draw=blue!50, innerhighlight] (q6) -- (q11);
    \path[draw=red!50, highlight] (q5) -- ($(q6)+(0.15cm,0.15cm)$);
\end{CycleImgBg}
\end{subfigure}
\begin{subfigure}[c]{0.3\textwidth}
    \centering
\begin{CycleImgBg}[2]
    \path[draw=green!50, highlight] (q1) -- (q4) -- (q6) -- (q11) to[bend right] (q16) -- ($(q4)+(0.25cm,-0.25cm)$);
    \path[draw=blue!50, innerhighlight] ($(q4)+(0.25cm,-0.25cm)$) -- (q16);
    \path[draw=blue!50, innerhighlight] (q6) -- (q11);
    \path[draw=red!50, highlight] ($(q6)-(0.15cm,0.15cm)$) -- (q8);
\end{CycleImgBg}
\end{subfigure}
\begin{subfigure}[c]{0.3\textwidth}
    \centering
\begin{CycleImgBg}[3]
    \path[draw=green!50, highlight] (q1) -- (q4) -- (q6) -- (q11) to[bend right] (q16) -- (q18);
    \path[draw=blue!50, innerhighlight] (q16) -- (q18);
    \path[draw=blue!50, innerhighlight] (q6) -- (q11);
    \path[draw=red!50, highlight] ($(q6)-(0.15cm,0.15cm)$) -- (q8);
\end{CycleImgBg}
\end{subfigure}
\begin{subfigure}[c]{0.3\textwidth}
    \centering
\begin{CycleImg}[4]
    \node[placeholder] (q_base) at ($(q5)-(0.5cm,0.5cm)$) {};
    \node[] (q_side) at ($(q_base)-(1.0cm,0.05cm)$) {};

    \draw[] (q_base) -- (q_side);

    \begin{pgfonlayer}{bg}
        \path[draw=green!50, highlight] (q1) -- (q4) -- (q16) to[bend left] (q11) -- (q6) -- (q_base) -- (q_side);
        \path[draw=blue!50, innerhighlight] (q4) -- (q16);
        \path[draw=blue!50, innerhighlight] (q6) -- (q11);
        \path[draw=blue!50, innerhighlight] (q_base) -- (q_side);
        \path[draw=red!50, highlight] ($(q6)-(0.15cm,0.15cm)$) -- (q8);
        \path[draw=red!50, highlight] ($(q4)-(0.15cm,0.15cm)$) -- ($(q_base)+(0.15cm,0.15cm)$);
    \end{pgfonlayer}
\end{CycleImg}
\end{subfigure}
\begin{subfigure}[c]{0.3\textwidth}
    \centering
\begin{CrossCycleImgBg}[5]
    \path[draw=green!50, highlight] (q6) -- (q4) -- (q9) to[bend right] (q19) -- (q14) -- (q16);
    \path[draw=blue!50, innerhighlight] (q4) -- (q9);
    \path[draw=blue!50, innerhighlight] (q14) -- (q19);
    \path[draw=red!50, highlight] ($(q4)+(0.15cm,0.15cm)$) -- (q1_help) -- ($(q14)-(0.15cm,-0.15cm)$);
\end{CrossCycleImgBg}
\end{subfigure}
\begin{subfigure}[c]{0.3\textwidth}
    \centering
\begin{CrossCycleImgBg}[6]
    \path[draw=green!50, highlight] (q6) -- (q1_help) -- (q14) -- (q19) to[bend left] (q9) -- (q11);
    \path[draw=blue!50, innerhighlight] (q9) -- (q11);
    \path[draw=blue!50, innerhighlight] (q14) -- (q19);
    \path[draw=red!50, highlight] ($(q14)+(0.15cm,-0.15cm)$) -- (q16);
\end{CrossCycleImgBg}
\end{subfigure}
\begin{subfigure}[c]{0.3\textwidth}
    \centering
\begin{CrossCycleImgSideBg}[7]
    \path[draw=green!50, highlight] (q_side) -- (q_base) -- (q4) -- (q9) to[bend right] (q19) -- (q14) -- (q1_help) -- ($(q_base)+(0.25cm,0.25cm)$);
    \path[draw=blue!50, innerhighlight] (q4) -- (q9);
    \path[draw=blue!50, innerhighlight] (q14) -- (q19);
    \path[draw=blue!50, innerhighlight] (q_base) -- (q_side);
    \path[draw=red!50, highlight] ($(q4)-(0.15cm,0.15cm)$) -- (q6);
    \path[draw=red!50, highlight] ($(q14)+(0.15cm,-0.15cm)$) -- (q16);
\end{CrossCycleImgSideBg}
\end{subfigure}
\begin{subfigure}[c]{0.3\textwidth}
    \centering
\begin{CrossCycleImgSideBg}[8]
    \path[draw=green!50, highlight] (q_side) -- (q_base) -- (q4) -- (q9) to[bend right] (q19) -- (q14) -- (q16);
    \path[draw=blue!50, innerhighlight] (q4) -- (q9);
    \path[draw=blue!50, innerhighlight] (q14) -- (q19);
    \path[draw=blue!50, innerhighlight] (q_base) -- (q_side);
    \path[draw=red!50, highlight] ($(q_base)+(0.15cm,0.15cm)$) -- (q1_help) -- ($(q14)-(0.15cm,-0.15cm)$);
    \path[draw=red!50, highlight] ($(q4)-(0.15cm,0.15cm)$) -- (q6);
\end{CrossCycleImgSideBg}
\end{subfigure}
\begin{subfigure}[c]{0.3\textwidth}
    \centering
\begin{CrossCycleImgSideBg}[9]
    \path[draw=green!50, highlight] (q_side) -- (q_base) -- (q1_help) -- (q14) -- (q19) to[bend left] (q9) -- (q4) -- (q6);
    \path[draw=blue!50, innerhighlight] (q4) -- (q9);
    \path[draw=blue!50, innerhighlight] (q14) -- (q19);
    \path[draw=blue!50, innerhighlight] (q_base) -- (q_side);
    \path[draw=red!50, highlight] ($(q4)+(0.15cm,0.15cm)$) -- ($(q_base)-(0.25cm,0.25cm)$);
    \path[draw=red!50, highlight] ($(q14)+(0.15cm,-0.15cm)$) -- (q16);
\end{CrossCycleImgSideBg}
\end{subfigure}
 \end{minipage}}
    \caption{Patterns for detecting shortening cycles}
    \label{fig:cycle-heuristic}
\end{figure}
Note that for any graph~$G$, 
its canonical tree representation $\max\limits_\prec{[G]_{\cong}}$
cannot contain shortening cycles.
Otherwise, one could construct a representation with greater depth,
which would therefore be $\prec$-larger.

Our implementation detects shortening cycles in a tree representation
by matching it against the patterns depicted in Figure~\ref{fig:cycle-heuristic}.
Note that this listing is not necessarily exhaustive. 
In green, we mark paths that use a cycle edge connecting 
$I_1$ and $I_2$,
which would increase the depth of the node $I$ in the cases (1) -- (4),
or would increase the length of the main chain in the cases (5) -- (9).
For instance, in case (1) this is detected by computing the lengths $l_1$ (between node $I'$ and $I_1$) 
and $l_2$ (between $I$ and $I_2$), and comparing them with the length between $I$ and $I'$.
This computation is immediate
since the respective nodes lie on the same branch, 
whose nodes are numbered consecutively
according to the depth-first labeling required by Defition~\ref{def:tree-representation}.
The extended path created by the cycle edge is represented in blue, 
and the nodes that are no longer part of the path are colored red.

\subsection{Generalization to Graph Problems}

While our encoding is optimized for the generation of molecules,
many insights of our implementation also apply in the general case
for generating any undirected graph, optionally with degree-constraints.
Hence, we also provide ASP programs that
\begin{enumerate}
    \item enumerate all graphs that satisfy prescribed degree specifications for a given set of nodes.\footnote{https://github.com/knowsys/eval-2024-asp-molecules/blob/main/graph\_degree.lp}\label{lab:generalized_degree}
    \item enumerate all undirected graphs for a given number of nodes,\footnote{https://github.com/knowsys/eval-2024-asp-molecules/blob/main/graph.lp}\label{lab:generalized_graph}
\end{enumerate} 
The ASP program in item~\ref{lab:generalized_degree} 
takes predicates like \ASPinline{nodes_with_deg(DEGREE, COUNT)} as input
instead of \ASPinline{molecular_formula/2} and \ASPinline{element/3}.
We guess \ASPinline{intended_degree/2} facts for each node, 
rather than assigning elements through \ASPinline{symbol/2}, 
which previously determined node degrees.
Since multi-edges are no longer considered, 
there is no need to guess \ASPinline{multi_bond} facts.
The degree of unsaturation is still 
used to infer the number of cycle edges that must be distributed.
In addition, the definition of the \ASPinline{lt} 
predicate was adapted to use \ASPinline{intended_degree} 
instead of \ASPinline{symbol}, and to omit checks related to multi-bonds.

To generate undirected graph structures 
in item~\ref{lab:generalized_graph}, 
we first guess the degree of each node and then proceed analogously to the previous case. \section{Experimental Evaluation}\label{sec_experiments}

We evaluate our ASP implementation (``\textsc{Genmol}'') for correctness,
avoidance of redundant solutions, and runtime.
All of our experiments were conducted on a
mid-end server (2$\times$QuadCore Intel Xeon 3.5GHz, 768GiB RAM, Linux NixOS 23.11)
using clingo v5.7.1 for ASP reasoning.
Evaluation data, scripts, and results are available online at
\url{https://github.com/knowsys/eval-2024-asp-molecules}.

\paragraph*{Evaluated Systems.}
The ASP-based core of our system \textsc{Genmol} consists of 174 rules
(including 44 constraints).\footnote{\url{https://github.com/knowsys/eval-2024-asp-molecules/blob/main/smiles.lp}}
As a gold standard, we use the
existing commercial tool \textsc{Molgen} (\url{https://molgen.de}), which produces molecular graphs using a proprietary canonicalization approach.
Moreover, we compare our approach to four ASP-based solutions,
labeled \textsc{Naive}, \textsc{Graph}, \textsc{sbass}, and \textsc{BreakID}. 
\textsc{Naive} is a direct ASP encoding of Definition~\ref{def:validgraph}, which serves as a baseline:
\begin{ASPminted}[1]

{ edge(X, Y) : atom(X), atom(Y), X < Y }. edge(Y, X) :- edge(X, Y).

1{ edge(X, Y, 1..3) }1 :- edge(X, Y), X < Y. edge(Y, X, M) :- edge(X, Y, M).

degree(N, D) :- atom(N), D = #sum { C, X : edge(N, X, C) }.
reachable(1). reachable(Y) :- reachable(X), edge(X, Y).
:- not reachable(X), atom(X).

:- atom(N), symbol(N, E), degree(N, D), element(E, _, V), D > V.

:- DEGREE_SUM = #sum { D : degree(N, D) },
   VALENCE_SUM = #sum { E, N : symbol(N, E), element(E, _, V) },
   molecular_formula("H", H_COUNT),
   H_COUNT != VALENCE_SUM - DEGREE_SUM.
\end{ASPminted}

\noindent The above listing skips the construction of the 
\ASPinline{atom(|$v$|)} and \ASPinline{symbol(|$v$|, |$\ell(v)$|)},
which can easily be obtained from the \ASPinline{molecular_formula}.

\textsc{Graph} refines \textsc{Naive} with symmetry-breaking constraints
based on Definition~12 of \cite[]{graph}, 
which apply to partitioned simple graphs $G$
represented by their adjacency matrix $\mathcal{A}_G$:
\begin{talign}
  \text{sb}(G) = \bigwedge_{e \in \mathbb{E}}\; \bigwedge_{\substack{\ell(i) = \ell(j) = e, \\i < j,\, j - i \neq 2}} \mathcal{A}_G[i] \preceq_{\{i, j\}} \mathcal{A}_G[j].
  \label{eq_symmetry}
\end{talign}
Here, $\preceq_{\{i, j\}}$ denotes the lexicographic order
comparing the $i$th and $j$th row of the adjacency matrix $\mathcal{A}_G$ 
of a molecular graph $G$, ignoring columns $i$ and $j$.
Graph representations that do not satisfy (\ref{eq_symmetry}) are pruned.
These constraints can be succinctly represented in ASP,
and are appended to the naive implementation:
\footnote{\url{https://github.com/knowsys/eval-2024-asp-molecules/blob/main/lex.lp}}
A graph $G$ satisfies $\text{sb}(G)$
if the rows of its adjacency matrix are lexicographically ordered.
This means that every nonzero entry in the matrix must be followed
by rows that either:
\begin{enumerate}
  \item Contain a nonzero entry further to the left (see lines 1--3), or
  \item Have a larger entry in the same column (see lines 4--6).
\end{enumerate}  
This is captured by the auxiliary predicate \ASPinline{sat(I, K, J)},
which specifies whether rows \ASPinline{I} and \ASPinline{J} 
are in lexicographic order up to column \ASPinline{K},
as illustrated in Figure~\ref{fig:graph-sat}.
Note that we only compare rows, whose nodes are assigned to the same element.
\begin{ASPminted}[1]
sat(I, K, J) :-
  symbol(I, T), symbol(J, T), symbol(K, T), symbol(L, T), J > I, J - I != 2,
  edge(I, K), edge(J, L), L < K, L != I.
sat(I, K, J) :-
  symbol(I, T), symbol(J, T), symbol(K, T), J > I, J - I != 2,
  edge(I, K, N), edge(J, K, M), N <= M.

:- symbol(I, T), symbol(J, T), symbol(K, T),
   edge(I, K), not sat(I, K, J), J > I, K != J, J - I != 2.
\end{ASPminted}
\begin{figure}
    \centering
    \newcommand\yf[2]{\Block[fill=#1,rounded-corners]{#2}{}}\newcommand\yd[2]{\Block[draw=#1,rounded-corners]{#2}{}}\newcommand\blap[1]{\vbox to 0pt{\hbox{#1}\vss}}\begin{equation*}
      \begin{bNiceMatrix}[first-row,first-col,last-col,margin,xdots/shorten=2mm]
               &                   \Cdots &                    L & \Cdots &                          K & \Cdots & \vphantom{.} \\
        \Vdots &                   \ddots &               \Vdots & \ddots &                     \Vdots & \ddots & \\
        I      &                        0 &                    0 &      0 & \yf{yellow!50}{1-1}      1 & \Cdots & \text{\ASPinline{edge(|$I$|,|$K$|)}} \\
        \Vdots &                   \ddots &               \Vdots & \ddots &                     \Vdots & \ddots & \\
        J_1    & \yf{green!20}{1-}      0 & \yd{red}{1-1}      1 &      0 &                          0 & \Cdots & \text{\ASPinline{sat(|$I$|,|$K$|,|$J_1$|)}} \\
        \Vdots &                   \ddots &               \Vdots & \ddots &                     \Vdots & \ddots & \\
        J_2    & \yf{red!20}{1-}        0 &                    0 &      0 &                          0 & \Cdots & \text{\ASPinline{not sat(|$I$|,|$K$|,|$J_2$|)}} $\quad \text{\blap{\Large\Lightning}}$ \\
        \Vdots &                   \ddots &               \Vdots & \ddots &                     \Vdots & \ddots & \\
      \end{bNiceMatrix}
    \end{equation*}
    \caption{Visualization of \textsc{Graph} symmetry-breaking encoding}
    \label{fig:graph-sat}
\end{figure}

\textsc{sbass} \cite[]{sbass} and \text{BreakID} \cite[]{breakid_system_description}
append instance-specific symmetry-breaking constraints to the ground program
by considering the automorphisms of its graph representation.
\text{BreakID} is run on the \textsc{Naive} implementation,
while we used \textsc{sbass} with an equivalent aggregate-free version.\footnote{\url{https://github.com/knowsys/eval-2024-asp-molecules/blob/main/naive-SBASS.lp}}

All evaluated ASP encodings were passed through 
the heuristic non-ground optimizer \textsc{ngo} (\url{https://potassco.org/ngo/}),
but no performance improvements were observed, 
suggesting that our encodings are reasonably efficient.

\paragraph*{Data set.}
For evaluation, we have extracted a dataset of molecules with
molecular formulas and graph structures using
the Wikidata SPARQL service \cite[]{Malyshev+2018:WikidataSPARQL}.
We selected 8,980 chemical compounds of up to 17 atoms (due to performance constraints)
with SMILES and an article on English Wikipedia
as a proxy for practical relevance.
Compounds with unconnected molecular graphs,
atoms of non-standard valence, and subgroup elements were excluded,
resulting in a dataset of 5,625 entries, of which we found 152 to have non-parsable SMILES.

\paragraph*{Evaluation of Correctness.}
Given the complexity of the implementation, we also assess its correctness empirically.
To this end, we augment our program with ASP rules that take an additional
direct encoding of a molecular graph as input, and that check if the
molecular graph found by \textsc{Genmol} is isomorphic to it.
This allows us to
determine if the given structures of molecules in our data set can be found
in our tool.
The validation graph structure is encoded in facts \ASPinline{required_bond(|$v_1$|,|$\ell(v_1)$|,|$v_2$|,|$\ell(v_2)$|,|$b(\{v_1,v_2\})$|)}
that were extracted from the SMILES representation in Wikidata.

Correctness experiments were measured with a timeout of 7 minutes.
Out of 5,473 compounds, a matching molecular structure was found for 5,338,
whereas 132 could not be processed within the timeout. For three compounds,
\href{https://www.wikidata.org/wiki/Q21099635}{\emph{Sandalore} (Wikidata ID Q21099635)}, and
\href{https://www.wikidata.org/wiki/Q113691142}{\emph{Eythrohydrobupropion} (Q113691142)} as well as
\href{https://www.wikidata.org/wiki/Q72518680}{\emph{Threodihydrobupropion} (Q72518680)},
the given structures could not be reproduced, which we traced back to errors in Wikidata that
we have subsequently corrected.

The evaluation therefore suggests that \textsc{Genmol} can find the correct molecular structures
across a wide range of actual compounds.
Timeouts occurred primarily for highly unsaturated, larger compounds (over 16 atoms),
where millions of solutions exist.

\paragraph*{Evaluation of symmetry-breaking.}
To assess to what extent our approximated implementation of canonical tree representations
succeeds in avoiding redundant isomorphic solutions, we consider the smallest 1,750
distinct molecular formulas from our data set.
We then computed molecular graph representations for all 1,750 cases for each of our evaluated systems,
using \textsc{Molgen} as a gold standard to determine the actual number of distinct molecular graphs.
The timeout for these experiments was 60 seconds.
\begin{figure}[h]
\includegraphics[width=\textwidth]{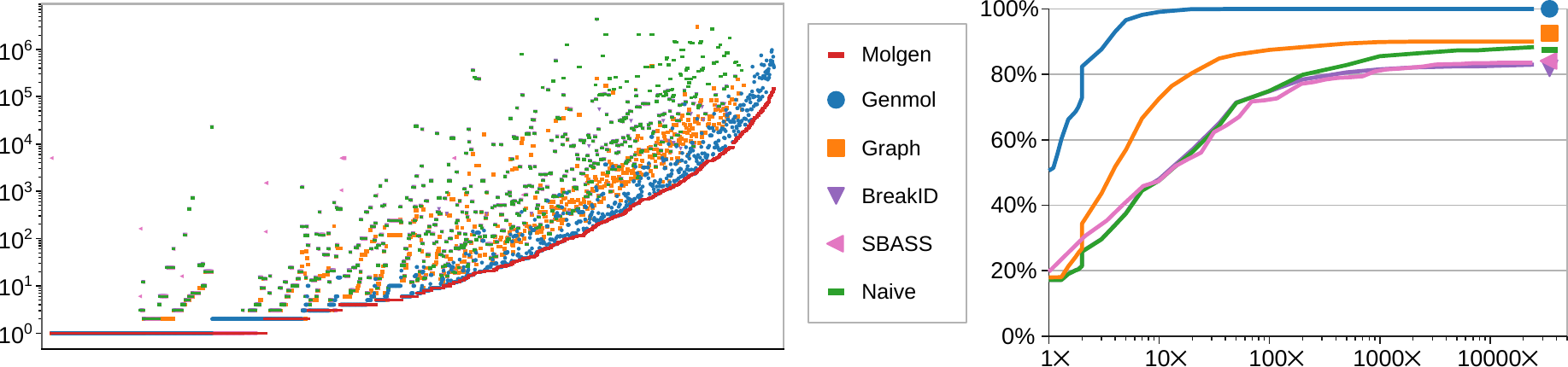}\caption{Number of models for each compound in the data set (left) and ratio of compounds with model counts within a factor of the gold standard (right)}\label{fig-comp-symmetry-breaking}
\end{figure}
The number of returned solutions are shown in Figure~\ref{fig-comp-symmetry-breaking} (left), with samples sorted
by their number of distinct graphs according to \textsc{Molgen}. As expected, \textsc{Molgen} is a lower
bound, and in particular no implementation finds fewer representations (which would be a concern for correctness),
while \textsc{Naive} is an upper bound. As expected, no ASP tool achieves perfect canonization of results, but the
difference between the number of solutions and the optimum vary significantly. In particular,
 \textsc{BreakID} rarely improves over \textsc{Naive} (just 24 such cases exist), though it does cause one third more timeouts.

For \textsc{Naive}, some samples led to over 20,000 times more models than \textsc{Molgen}, whereas the largest
such factor was just above $39$ for \textsc{Genmol} (for $\mathit{C}_8\mathit{H}_2$).
Figure~\ref{fig-comp-symmetry-breaking} (right) shows the ratio of samples with model counts
within a certain factor of the gold standard. For example, the values at $10$ show the ratios of samples for
which at most ten times as many models were computed than in \textsc{Molgen}: this is $99\%$ for \textsc{Genmol},
$72\%$ for \textsc{Graph}, and $48\%$ for \textsc{BreakID}, \textsc{sbass} and \textsc{Naive}.
All ratios refer to the same total, so the curves converge to the ratio of cases solved within 
the timeout.
Their starting point marks the ratio with exact model counts:
$51\%$ for \textsc{Genmol} and $17\%-19\%$ for the others.

We conclude that symmetry-breaking in \textsc{Genmol}, 
while not perfect, 
performs very well in comparison with generic approaches. 
In absolute terms, the results are often close enough to the optimum 
that any remaining redundancies could be removed in a post-processing 
step using conventional graph isomorphism tools. 
Although applying such checks to the entire solution space 
would be computationally prohibitive, 
users are typically interested only in a relatively small subset 
of the generated molecules that are to be sorted according to some chosen criterion.

\paragraph*{Performance and Scalability}
To assess the runtime of our approach, we conduct experiments with
series of uniformly created molecular formulas of increasing size. 
We consider two patterns:
formulas of the form $\mathit{C}_n\mathit{H}_{2n+2}O$ belong to tree-shaped
molecules (such as ethanol with SMILES \texttt{OCC}), whereas
formulas of the form $\mathit{C}_n\mathit{H}_{2n}O$ require one cycle
(like oxetane, \texttt{C1COC1}) or double bond (like acetone, \texttt{CC(=O)C}).
We use a timeout of 10 minutes for all tools except \textsc{Molgen}, whose free version
is limited to 1 minute runtime. All runs are repeated five times and the median is reported.
Moreover, clingo allows to specify the search strategy via the \texttt{----configuration}
parameter. For the ASP-based approaches, we tested all available options 
on the second pattern (with a 1 minute timeout) 
and found no significant improvements compared to the default \texttt{auto} setting. 

\begin{figure}
\includegraphics[width=\textwidth]{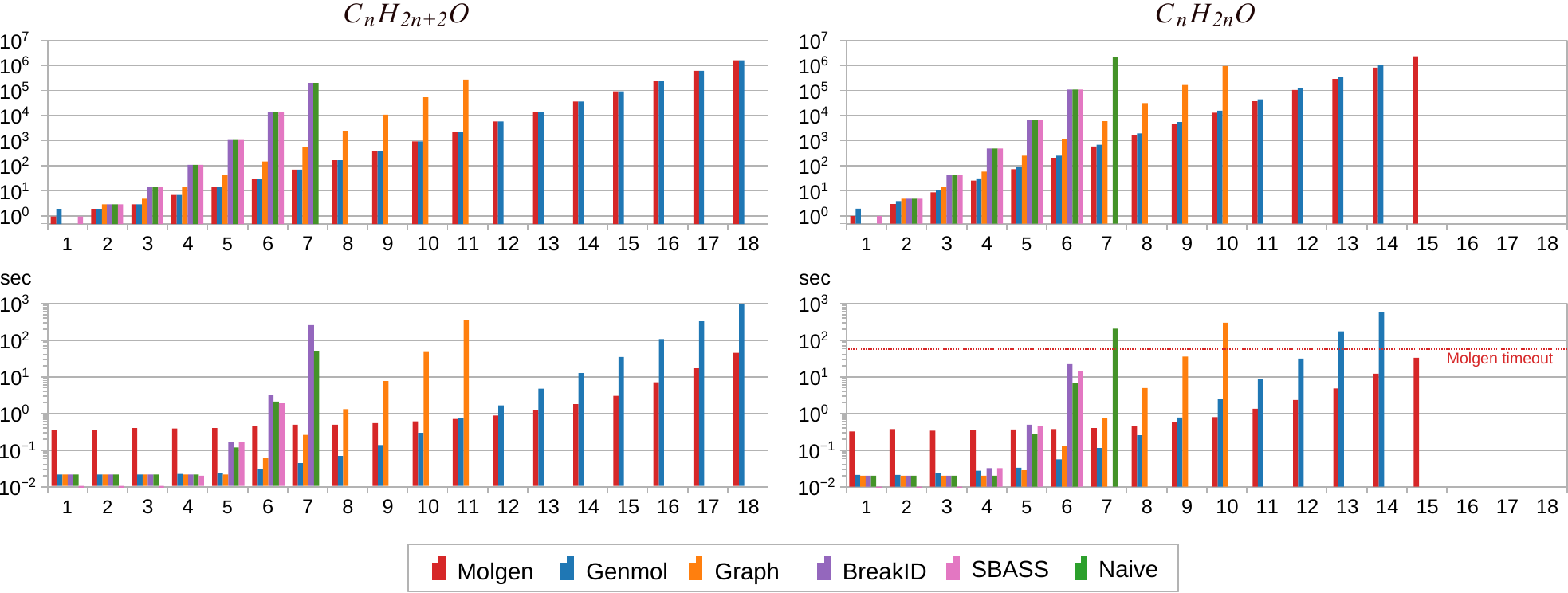}
\caption{Model numbers (top) and runtimes (bottom) for molecules of increasing size}\label{fig_scale}
\end{figure}
The results are shown in Figure~\ref{fig_scale}. As before, \textsc{Molgen} serves as a gold standard.
As seen in the graphs on the top, the number of distinct molecular structures grows exponentially,
and this optimum is closely tracked by \textsc{Genmol} (perfectly for the tree-shaped case).
\textsc{Graph} reduces model counts too, albeit less effectively, whereas \textsc{BreakID} and \textsc{sbass} do not achieve
any improvements over \textsc{Naive} in these structures.

As expected, the runtimes indicate similarly exponential behavior as inputs grow, but 
the point at which computation times exceed the timeout is different for each tool.
\textsc{Molgen} achieves the best scalability overall, whereas \textsc{Genmol} is most scalable
among the ASP-based approaches. \textsc{BreakID} and \textsc{sbass} are even slower than \textsc{Naive}, largely due to longer solving times, whereas the preprocessing of the grounding had no notable impact.

\paragraph*{Learning symmetry-breaking Constraints.}
We also explored an approach proposed by \cite[]{learning}
that aims to automatically learn symmetry-breaking constraints,
which can be appended to the input program as ASP rules.
This method relies on \textsc{sbass} to
derive instance-specific constraints
that are used to 
classify instances as positive or negative examples 
depending on whether they satisfy the derived constraints.
These examples are provided as input to \textsc{ilasp} \cite[]{ILASP_system},
which attempts to learn ASP rules that eliminate negative examples
while preseriving positive ones.

Since \textsc{sbass} did not yield any significant improvement
over the \textsc{Naive} implementation in our case,
we instead tested whether \textsc{ilasp} could 
automatically derive symmetry-breaking constraints
using explicitly constructed examples.
As a representative set of examples,
we used instances of the form 
$\mathit{C}_c\mathit{H}_h\mathit{N}_n\mathit{O}_o$ with $c\in\{3,4\}$, $n,o\in\{0,1\}$, $h=2\cdot c +2+n-2\cdot x$, and $x\in\{0,1,2\}$ cycles.
For each class of isomorphic molecular graphs,
we selected the one with the lexicographically smallest adjacency matrix
as a positive example and randomly sampled a negative example from the others,
yielding a total of 1,228 examples.

The hypothesis space for \textsc{ilasp},
i.e., the set of possible rules,
was configured in such a way 
as to allow the construction of the \textsc{Graph} encoding,
resulting in 388,098 candidate rules.
We found that even with 768 GiB of RAM, 
the process exceeded available memory after approximately nine hours, 
making it infeasible to complete the computation.
A considerably easier version of the problem
where the definition of the \ASPinline{sat} predicate (lines 1--6) 
is given as background knowledge to \textsc{ilasp}
successfully derives the missing constraint (lines 8--9)
in 7.5 minutes.

We conclude that useful symmetry-breaking constraints 
cannot be learned in our setting
without providing a significant amount of background knowledge,
which is unrealistic in practice. 

 \section{Conclusion}

Motivated by the practical problem of interpreting results in mass spectrometry,
we developed an ASP-based approach for enumerating molecular structures
based on partial information about their chemical composition.
We focused on molecular formulas as inputs, but our prototype \textsc{Genmol}
can also use additional signals, such as known molecular fragments.
Indeed, one of the main strengths of an ASP-based solution is that it is
very simple to refine the search space with additional constraints.
Our extensive evaluation showed that our approach improves upon direct ASP-based solutions
and existing optimizations by several orders of magnitude, regarding both performance
and conciseness, bringing it into the range of optimized commercial implementations
that (presumably) lack the flexibility of ASP.

In general, we believe that our conceptual work towards canonical graph representations
and their efficient realization in ASP is relevant beyond our motivating application
scenario. Most of our ideas can be readily generalized to other graph-related search problems,
and may therefore be of interest to ASP practitioners. Moreover, our real-world evaluation data
can be a valuable benchmark for further research on automated symmetry-breaking in ASP.

 \paragraph{Acknowledgements.}
This work is supported by 
Deutsche Forschungsgemeinschaft 
in project number 389792660 
(TRR 248, \href{https://www.perspicuous-computing.science/}{CPEC}),
by the Bundesministerium für Bildung und Forschung 
under European ITEA project 01IS21084 
(\href{https://www.innosale.eu/}{InnoSale}), in the \href{https://www.scads.de/}{Center for Scalable Data Analytics and
Artificial Intelligence} (\href{https://scads.ai/}{ScaDS.AI}),
and by BMBF and DAAD 
in project 57616814 (\href{https://www.secai.org/}{SECAI}).

\paragraph{Disclosure of Interests.}
The authors have no further competing interests to declare. 
\bibliographystyle{tlplike}
\bibliography{references}
\appendix
\section{List of predicates}\label{app:pred-list}

\newcommand{\ASPinlineBox}[1]{\fcolorbox{blue!5}{blue!5}{\ASPinline{#1}}}

\noindent
\begin{tabular}{p{.37\textwidth}|p{.595\textwidth}}
    \textbf{predicate} & \textbf{use} \\
    \hline
    \ASPinlineBox{molecular_formula/2} &
    Associates element symbols with atom counts \\
    \ASPinlineBox{element/3} &
    Element symbol with atomic number and valence \\
    \ASPinline{non_hydrogen_atom_count/1} &
    Node count in hydrogen-supressed molecular graph \\
    \ASPinline{atom/1} &
    Nodes of the graph \\
    \ASPinline{main_chain_len/1} &
    Length of the longest path in the graph \\
    \ASPinline{valence/1} &
    Valence value \\
    \ASPinline{valence_count/2} &
    Number of atoms with given valence value \\
    \ASPinline{element_valence/2} &
    Reducing from muti-valence elements \\
    \ASPinline{element_valence_count/3} &
    Number of atoms of given element type and valence \\
    \ASPinline{num_cycles_or_multi_bonds/1} &
    Degree of unsaturation / Number of additional edges \\
    \ASPinline{num_multi_bonds/1} &
    Total number of additional multi bond edges \\
    \ASPinline{num_cycles/1} &
    Total number of additional cycle edges \\
    \ASPinlineBox{symbol/2} &
    Vertex coloring to associate atoms with element \\
    \ASPinline{element_count/3} &
    Track used-up element symbols (up to atom) \\
    \ASPinline{multi_bond_count/2} &
    Track used-up multi-bonds (up to atom) \\
    \ASPinlineBox{multi_bond/2} &
    Associate multiplicity to edge \\
    \ASPinline{cycle_start_num/2} &
    Track used-up cycle start markers (up to atom) \\
    \ASPinline{cycle_start_count/2} &
    Number of cycle start markers at atom \\
    \ASPinlineBox{cycle_start/2} &
    Cycle start marker \\
    \ASPinline{cycle_end_count/2} &
    Track used-up cycle end markers (up to atom) \\
    \ASPinline{cycle_end_num/2} &
    Number of cycle end markers at atom \\
    \ASPinlineBox{cycle_end/2} &
    Cycle end marker \\
    \ASPinline{preset_bonds/2} &
    Number of binding places blocked due to upwards multi-bond or cycle markers \\
    \ASPinlineBox{branching/2} &
    Number of direct children beneath node \\
    \ASPinline{size/2} &
    Size of subtree beneath node (including) \\
    \ASPinline{depth/2} &
    Depth of subtree beneath node (including) \\
    \ASPinline{postset_bond/2} &
    Number of binding places blocked due to (multi-) bonds to child nodes\\
    \ASPinline{free_bonds/2} &
    Number of remaining binding places, reserved for hydrogen atoms \\
    \ASPinline{parent/3} &
    Parent relation (children ordered by index) \\
    \ASPinline{part_eq/3} &
    Parial equality of $R(G,v)$ tuples up to some position \\
    \ASPinline{eq/2} &
    Subtree equality \\
    \ASPinline{lt/2} &
    Order on subtrees \\
    \ASPinline{max_len/1} &
    Maximum possible chain length in the graph \\
    \ASPinline{chain/2} &
    Transitive chain relation (main-chain only) \\
    \ASPinline{path/3} &
    Transitive relation (side-chain only) with length \\
    \ASPinline{cycle/5} &
    Info on cycles between main-chain and side-chain \\
    \ASPinline{cross_cycle/5} &
    Info on cycles between neighbouring side-chains \\
    \ASPinline{side_chain/2} &
    Depth of side chains \\
\end{tabular}    
 \end{document}